\def\draft{1}
\def\doubleblind{0}
\documentclass[11pt,letterpaper]{article}

\usepackage[margin=1in]{geometry}
\usepackage[utf8]{inputenc}
\usepackage{amsthm}
\usepackage{amsthm, amsmath, amssymb, mathtools}
\usepackage{bm,bbm}
\usepackage{graphicx}
\usepackage{color,xcolor}
\usepackage{paralist,enumitem}
\usepackage[normalem]{ulem}
\usepackage{tabularx}
\usepackage{tikz}
\usepackage{caption,comment}
\usepackage{algorithmicx}
\usepackage{algorithm}
\usepackage{algpseudocode}
\usepackage{lipsum}
\newcounter{algsubstate}
\renewcommand{\thealgsubstate}{\alph{algsubstate}}

\algnewcommand\algorithmicinput{\textbf{Input:}}
\algnewcommand\Input{\item[\algorithmicinput]}

\algnewcommand\algorithmicoutput{\textbf{Output:}}
\algnewcommand\Output{\item[\algorithmicoutput]}

\algnewcommand\algorithmicgoal{\textbf{Goal:}}
\algnewcommand\Goal{\item[\algorithmicgoal]}

\newcommand{\blind}[2]{{\ifnum\draft=1\color{purple}\fi \ifnum\doubleblind=1#2\fi\ifnum\doubleblind=0#1\fi\ifnum\doubleblind=2$\{$ #1 $\vert$ #2 $\}$\fi}}

\makeatletter
\newcommand{\algmargin}{\the\ALG@thistlm}
\algnewcommand{\parState}[1]{\State%
  \parbox[t]{\dimexpr\linewidth-\algmargin}{\strut #1\strut}}
\usepackage[colorlinks=true, allcolors=blue]{hyperref}
\usepackage[capitalize,noabbrev,nameinlink]{cleveref}
\usepackage{booktabs}
\usepackage{subcaption}

\usepackage{silence}
\usepackage{xspace}

\usepackage[colorinlistoftodos,textsize=tiny,textwidth=2cm,color=green!50!gray]{todonotes}

\ifdefined\DEBUG
\newcommand{\amatodo}[1]{\textcolor{orange}{#1}}% inline comment, disappears
\newcommand{\amati}[1]{\textcolor{orange}{[AS:#1]}}% inline comment, disappears
\newcommand{\amatr}[1]{\todo[color=orange!100!black!20]{AS: #1}}%side comment, disappears

\newcommand{\snote}[1]{\textcolor{red}{[SV: #1]}}% inline comment, disappears
\else

\newcommand{\amati}[1]{}
\newcommand{\amatr}[1]{}
\newcommand{\amatodo}[1]{}% inline comment, disappears

\newcommand{\snote}[1]{}
\newcommand{\santi}[1]{}
\fi

\usepackage[style=alphabetic, backend=biber, minalphanames=3, maxalphanames=4, maxbibnames=99, maxcitenames=4]{biblatex}

\usepackage{amsthm}
\usepackage{thmtools,thm-restate}

\numberwithin{equation}{section}
\declaretheoremstyle[bodyfont=\it,qed=\qedsymbol]{noproofstyle}

\declaretheorem[name=Observation,numbered=no]{observation*}

\declaretheorem[numberlike=equation]{theorem}

\declaretheorem[name=Theorem,numbered=no]{theorem*}

\declaretheorem[numberlike=equation]{lemma}
\declaretheorem[name=Lemma,numbered=no]{lemma*}

\declaretheorem[numberlike=equation]{corollary}
\declaretheorem[name=Corollary,numbered=no]{corollary*}

\declaretheorem[name=Proposition,numbered=no]{proposition*}

\declaretheorem[name=Claim,numbered=no]{claim*}

\declaretheorem[name=Conjecture,numbered=no]{conjecture*}

\declaretheorem[name=Question,numbered=no]{question*}

\declaretheoremstyle[spaceabove=12pt,spacebelow=12pt,bodyfont=\it]
{defstyle} 

\declaretheorem[numberlike=equation,style=defstyle]{definition}
\declaretheorem[unnumbered,name=Definition,style=defstyle]{definition*}

\declaretheorem[numberlike=equation,style=defstyle]{example}
\declaretheorem[unnumbered,name=Example,style=defstyle]{example*}

\declaretheorem[unnumbered,name=Notation=defstyle]{notation*}

\declaretheorem[unnumbered,name=Construction,style=defstyle]{construction*}

\declaretheoremstyle[]{rmkstyle}

\crefname{claim}{claim}{claims}
\crefname{fact}{fact}{facts}
\crefname{corollary}{corollary}{corollaries}
\crefname{theorem}{theorem}{theorems}
\crefname{table}{table}{tables}

\Crefname{theorem}{Theorem}{Theorems}
\Crefname{claim}{Claim}{Claims}
\Crefname{fact}{Fact}{Facts}
\Crefname{corollary}{Corollary}{Corollaries}
\Crefname{table}{Table}{Tables}

\newcommand{\Bool}{\ensuremath{\{0,1\}}}

\newcommand{\Z}{\mathbb{Z}}

\newcommand{\critr}{criterion\xspace}
\newcommand{\critrs}{criteria\xspace}

\newcommand{\ring}{commutative ring with identity}
\newcommand{\ringg}[1]{commutative ring \ensuremath{#1} with identity}

\newcommand{\CSP}{\ensuremath{\mathsf{CSP}}\xspace}
\newcommand{\csp}{\CSP}
\newcommand{\CSPs}{\ensuremath{\mathsf{CSPs}}\xspace}

\newcommand{\cspx}[1]{\ensuremath{\csp(#1)}}
\newcommand{\cspg}{\cspx{\Gamma}}
\newcommand{\cspr}{\cspx{R}}

\newcommand{\nr}[2]{\ensuremath{\operatorname{NRD}_{#2}(#1)}}

\newcommand{\nrp}[3]{\ensuremath{\operatorname{NRD}^{#3}_{#2}(#1)}}

\newcommand{\NRD}{\ensuremath{\mathsf{NRD}}\xspace}
\newcommand{\nrd}{\NRD}
\newcommand{\nrdx}[1]{\ensuremath{\nr{#1}{n}}}
\newcommand{\nrdg}{\nrdx{\Gamma}}
\newcommand{\nrdr}{\nrdx{R}}

\newcommand{\NAE}{\ensuremath{\mathsf{NAE}}}
\newcommand{\ext}{\ensuremath{\operatorname{ext}}}

\newcommand{\wt}[1]{\ensuremath{\operatorname{W}(#1)}}
\newcommand{\wtr}{\wt{R}}

\newcommand{\pppol}{\ensuremath{\operatorname{p^2Pol}}} %partial pattern polymorphism

\newcommand{\OR}{\ensuremath{\operatorname{OR}}\xspace}

\newcommand{\Sat}{\mathrm{Sat}}
\newcommand{\captures}{\mathrm{captures}}
\newcommand{\tbalanced}{\mbox{$t$-balanced}}
\newcommand{\tbalancedness}{\mbox{$t$-balancedness}}

\newcommand{\SAT}{\ensuremath{\mathsf{SAT}}\xspace}



\allowdisplaybreaks

\ifnum\doubleblind=0
\title{Non-Redundancy of Low-Arity Symmetric Boolean CSPs}
\author{Amatya Sharma\thanks{University of Michigan, Ann Arbor, Michigan, USA. Email: \texttt{amatya@umich.edu}. Work done in part when the author was visiting the Toyota Technological Institute at Chicago as an intern in Fall 2025.}\and{Santhoshini Velusamy \thanks{University of Waterloo, Ontario, Canada. Supported by NSERC Discovery grant RGPIN-2026-04689 and ECR supplement DGECR-2026-00421. Supported in part by NSF award CCF 2348475 when the author was affiliated with Toyota Technological Institute at Chicago. Email: \texttt{santhoshini.velusamy@uwaterloo.ca}}}}
\fi
\ifnum\doubleblind=1
\author{Anonymous authors}
\fi

\date{}
\usepackage{mathpazo}
\usepackage{microtype}
\begin{document}
\maketitle
\begin{abstract}

Non-redundancy, introduced by Bessiere, Carbonnel, and Katsirelos (AAAI 2020), has emerged as a central structural parameter for Constraint Satisfaction Problems (\textsf{CSPs}): it underlies kernelization and exact sparsification, and recent work shows that it also governs approximate sparsification and exact streaming complexity. The non-redundancy \textsf{NRD} measures the largest size of a \textsf{CSP}\ instance  that admits no smaller subinstance with the same set of satisfying assignments.

In this paper we study the non-redundancy ($\mathsf{NRD}_n(R)$) for Boolean symmetric \textsf{CSPs} specified by a symmetric $r$-ary relation $R$, that is, a relation whose value depends only on Hamming weight. An instance of \textsf{CSP}$(R)$ consists of $n$ variables and $m$ constraints, each given by an $r$-tuple of variables, and a constraint is satisfied if and only if the resulting tuple belongs to $R$. This is a natural and broad class containing \textsf{Max-Cut}, \textsf{Max-$k$-SAT}, etc. Our main result is a near-complete classification of the growth of $\mathsf{NRD}_n(R)$ for symmetric Boolean predicates of arity at most $5$. Using computational experiments together with algebraic upper- and lower-bound criteria, we determine the asymptotic growth of $\mathsf{NRD}_n(R)$ for every symmetric Boolean predicate of arity at most $4$. At arity $5$, the framework resolves all predicates except two.
%, namely those corresponding to the weight sets $\{0,2,3\}$ and $\{1,2,4\}$,for which we obtain an $\Omega(n^2)$ lower bound and an $O(n^3)$ upper bound.

On the upper-bound side, we introduce a lifted, higher-degree version of the balancedness notion of Chen, Jansen, and Pieterse (Algorithmica 2020), and show that it yields general upper bounds on non-redundancy. More precisely, we define $t$-balancedness and prove that it is equivalent to the existence of degree at most $t$ multilinear polynomials ``capturing'' $R$, implying that $\mathsf{NRD}_n(R)=O(n^t)$ for every $t$-balanced relation $R$. On the lower-bound side, we use Carbonnel's  (CP 2022) framework, which gives a criterion for when a predicate admits a special reduction from the $k$-ary OR relation; since OR already has non-redundancy $\Omega(n^k)$, this yields corresponding lower bounds on $\mathsf{NRD}_n(R)$.
While these determine the non-redundancy of symmetric predicates up to arity $4$ and nearly all predicates of arity $5$, we discover two exceptional predicates of arity $5$ that are not characterized by this framework---we obtain an $\Omega(n^2)$ lower bound and an $O(n^3)$ upper bound. We further show that these remaining cases reduce to natural extremal set-system questions.

\paragraph{Independent Work. } Very recently, Brakensiek, Guruswami, and Putterman \cite{brakensiek2026classification} used a similar experimental viewpoint to study arity-$4$ Boolean predicates, and one of their remaining cases connects to the symmetric arity-$5$ predicate with W$(R_{023})=\{0,2,3\}$ that also appears here.

\end{abstract}

\newpage
\tableofcontents

\section{Introduction}

Constraint satisfaction problems ($\CSP$s) are a central framework in theoretical computer science. Formally, given a finite domain $D$ and a constraint language $\Gamma$ consisting of $r$-ary relations over
$D$, an instance of $\CSP(\Gamma)$ consists of $n$ variables $V := \{x_1,\ldots,x_n\}$ together with
constraints of the form $\{R,(\bar x)\}$, where $R\in \Gamma$ and $\bar x\in V^r$.
\footnote{
Allowing unary maps on variables (e.g. literals in the Boolean case), constants,
or repeated variables inside constraints does not change the asymptotic
non-redundancy; see \Cref{sec:positive-nrd}.
}
The goal is to decide whether there exists an assignment $\sigma :  V \to D$ such that $(\sigma(x_1),\dots,\sigma(x_r)) \in R$ for every constraint. This framework captures many familiar problems, including $\SAT$, graph coloring, and systems of local equations, and has been studied from the viewpoints of complexity theory 
%\cite{Schaefer78,Chen06SchaeferExposition}\amati{add more citations}
\cite{Schaefer78,jeavons1997closure,feder1998computational,Hastad01,Khot02,bulatov2005classifying,Chen06SchaeferExposition,zhuk2020proof}, 
approximation \cite{khanna1997complete,khanna2001approximability,lewin2002improved,bazgan2003polynomial,ACMM05,cohen2006complexity,KKMO07,raghavendra2008optimal,KS09,austrin2009approximation,guruswami2011lasserre,manurangsi2015approximating,fotakis2016sub,DBLP:conf/stoc/KhotTW14,CLRS16,GT17,DBLP:conf/focs/AlevJT19,jeronimo2020unique,bafna2021playing,jeronimo2021near,als25,alms25},
(approximate) sparsification 
\cite{kogan2015sketching,filtser2017sparsification,butti2020sparsification,KPS24,khanna2025efficient,khanna2025theory}%\cite{khanna2001approximability,kogan2015sketching,filtser2017sparsification,butti2020sparsification,khanna2025efficient,khanna2025theory}
% \cite{khanna1997complete,khanna2001approximability,Hastad01,Khot02,KKMO07,raghavendra2008optimal,austrin2009approximation,lewin2002improved},
% Approximation / Max-CSP background
,
kernelization and exact sparsification \cite{dell2014satisfiability,jansen2019optimal,chen2020best,lagerkvist2020sparsification, bessiere2020chain,carbonnel2022redundancy,brakensiek2025redundancy,brakensiek2025richness,brakensiek2025tight}, and, more recently, streaming and sketching \cite{KKS15,GVV17,KK19,CGV20,CGS+22-linear-space,Sin23-kand,KolParamonovSaxenaYu-ITCS23,Vu-2208.09160,CGSV24,STV25,FMW26,FMW26-tight,ABF26,STV26}.

\paragraph{Symmetric Boolean \CSPs.}
A relation $R \subseteq \{0,1\}^r$ is symmetric if membership in $R$ depends only on the Hamming weight of the input. Equivalently, there is a set $\wtr \subseteq \{0,1,\dots,r\}$ such that $R = \{x \in \{0,1\}^r : |x| \in \wt{R}\}$.
Thus a symmetric predicate can be specified simply by its set of accepted weights, and throughout the paper we will freely identify a symmetric predicate with its weight set $\wtr$.

% There are several standard conventions for constraints over $R$: one may
% allow literals and constants, allow only variables with repetitions, or
% restrict to simple positive constraints with distinct variables. For fixed
% arity, these choices are equivalent for our purposes: Section~6 shows that
% they have the same asymptotic non-redundancy up to a constant-factor blow-up
% in the number of variables. Thus the exponent of $\nrdr$ is independent
% of these modeling choices, and we freely move between the models when
% convenient.

Symmetric Boolean predicates already contain many natural and canonical problems. For example, the $r$-ary \OR predicate corresponds to $\wt{\OR_r}=\{1,\dots,r\}$ and gives the usual $r$-\SAT\ constraints; the $r$-ary \textsf{Not-All-Equal} predicate corresponds to $\wt{\NAE_r}=\{1,\dots,r-1\}$ and is closely related to hypergraph $2$-coloring and $\NAE$-$\SAT$; threshold predicates correspond to intervals of the form $\{t,t+1,\dots,r\}$ and capture ``at least $t$'' type constraints; and parity predicates correspond to the even or odd weights and encode systems of linear equations over $\mathbb{F}_2$. For these reasons, symmetric predicates form a broad yet structured class, and they have been studied explicitly in work on approximation, sparsification, sketching, and non-redundancy \cite{austrin2012quadratic,guruswami2017towards,BHP+22, brakensiek2021promise, brakensiek2023sdps,khanna2025efficient}.

Thus, we restrict our study to symmetric Boolean \CSPs in this paper. We ask how large can an instance be if every one of its constraints is non-redundant, that is, if no constraint can be removed without changing the satisfying set. This leads to the notion of non-redundancy.

\paragraph{Non-redundancy.} Given a \CSP\ instance, a constraint is \emph{redundant} if deleting it does not change the set of satisfying assignments. An instance is \emph{non-redundant} if every constraint is \emph{non-redundant} in this sense. Following Bessiere, Carbonnel, and Katsirelos \cite{bessiere2020chain}, who introduced this viewpoint in the context of learning \CSP\ instances with few membership queries, and Carbonnel's \cite{carbonnel2022redundancy} subsequent structural study, we define the non-redundancy dimension of a constraint language $\Gamma$ by
\[
\nrdg := \max\{ |I| : I \text{ is a non-redundant } \cspg\text{ instance on } n \text{ variables}\}.
\]
When $\Gamma=\{R\}$ consists of a single relation, we also write $\nrdr$. Thus $\nrdr$ asks a basic extremal question: how many $R$-constraints can be simultaneously non-redundant in an instance on $n$ variables? At first sight, this is purely a combinatorial quantity. However, it turns out to be closely connected to several algorithmic and structural questions about \CSPs.

\paragraph{Connections of \NRD to other problems.}

One such connection comes from \emph{exact sparsification}. The goal is to reduce any $n$-variable instance in polynomial time to an equivalent subinstance of size $g(n)$, where equivalence means having exactly the same satisfying assignments. Exact sparsification is closely related to kernelization, where one preprocesses an instance to an equivalent instance whose size is bounded in terms of a chosen parameter. In our setting the parameter is the number $n$ of variables, and exact sparsification asks for the stronger property that the output is an equivalent subinstance obtained by deleting constraints. %\snote{This sentence is a sudden jump and conveys no meaning unless we define kernelization and are more clear what parameter we are referring to, etc.\amati{I see. can remove it. Just that when we say sparsificationm it is usually restricted to ``subinstace'' whereas kernelization need not keep a ``subisntance''}}. 
On the positive side, several structured Boolean \CSPs admit exact sparsification by retaining only a small subset of constraints \cite{jansen2019optimal,chen2020best,lagerkvist2020sparsification}. Non-redundant instances are the canonical obstruction to such subinstance sparsification: any equivalent subinstance must retain every constraint of a non-redundant input.

% One connection comes from exact sparsification. The goal is to reduce an \(n\)-variable instance in polynomial time to a small equivalent subinstance, where equivalence means having exactly the same satisfying assignments. This is closely related to kernelization, where one preprocesses an instance to an equivalent instance whose size is bounded in terms of a chosen parameter; here the parameter is the number \(n\) of variables, and exact sparsification asks for the stronger property that the output is obtained by deleting constraints. Non-redundant instances are the canonical obstruction to such subinstance sparsification: any equivalent subinstance of a non-redundant input must retain every constraint.

Another, more approximation-theoretic notion is \emph{approximate sparsification}, where one seeks a weighted subinstance of small size, with weights on constraints, that approximately preserves the value of every assignment. This line of work originates in graph and hypergraph cut sparsification and has since developed into a broad theory for \CSPs \cite{kogan2015sketching,filtser2017sparsification,butti2020sparsification,khanna2025efficient,khanna2025theory}. Very recently, Brakensiek and Guruswami \cite{brakensiek2025redundancy} proved that, for every unweighted predicate $R$, the worst-case size of approximate sparsifiers is governed, up to polylogarithmic factors, by $\nrdr$; see also the recent follow-up on random \CSP\ sparsification \cite{brakensiek2025tight}. Thus the same combinatorial parameter that obstructs exact sparsification also controls approximate sparsification.

Recently, there has been a similar connection in the \emph{streaming} setting as well. In very recent work by Sharma and Velusamy \cite{sharma2026characterizingstreamingdecidabilitycsps}, it was shown that, up to logarithmic factors, the single-pass streaming complexity of deciding satisfiability and exact sparsification is characterized by non-redundancy. Taken together, these results show that $\NRD$ is not merely a technical parameter arising in one corner of the literature, but rather a structural invariant that governs exact sparsification, approximate sparsification, and streaming decidability and streaming exact sparsification.

These connections motivate the problem studied in this paper: 

\begin{center}
\noindent\emph{Determine the growth of $\nrdr$ for symmetric Boolean predicates $R$.}
\end{center}

Our goal is both structural and computational: to identify algebraic \critrs that certify polynomial upper and lower bounds on $\NRD_n(R)$, and to use these criteria to classify low-arity symmetric Boolean predicates.
\subsection{Our results}

Our main result is a near-complete classification of the growth of $\nrdr$ for low-arity symmetric Boolean predicates. For arities at most $4$, our upper- and lower-bound \critrs always match, giving the exact asymptotic growth. At arity $5$, the same framework resolves all predicates except two.

\begin{theorem}
For every non-trivial symmetric Boolean predicate $R$, the following hold.
\begin{enumerate}
    \item If $R$ has arity at most $4$, then the asymptotic growth of $\nrdr$ is determined exactly.
    \item If $R$ has arity $5$, then the asymptotic growth of $\nrdr$ is determined exactly for all predicates except, up to bit-flip symmetry,\footnote{Bit-flip symmetry identifies $R\subseteq\{0,1\}^r$ with $\overline R:=\{1-x:x\in R\}$. For symmetric predicates, this sends $W$ to $r-W:=\{r-w:w\in W\}$ and preserves $\NRD_n$. See \Cref{sec:prelims}.} the two predicates $R_{023}$ and $R_{124}$ with $\wt{R_{023}}=\{0,2,3\}$ and $\wt{R_{124}}=\{1,2,4\}$. For each $R\in\{R_{023},R_{124}\}$, we obtain
$\Omega(n^2)\le \nrdr\le O(n^3)$.
\end{enumerate}
\end{theorem}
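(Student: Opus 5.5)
The proof is a finite case analysis. Bit-flip symmetry preserves $\NRD_n$, so we identify $W$ with $r-W$. Non-trivial means $\emptyset\neq W\subsetneq\{0,\dots,r\}$. For each $r\le 5$ we enumerate these weight sets up to this symmetry; there are only a few dozen at arity $5$. For each one we certify matching bounds $\Omega(n^k)$ and $O(n^k)$ with the same $k$. Every certificate comes from one of the two \critrs sketched in the abstract.

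\textbf{Upper bounds.} For each $W$ we find the least $t$ for which $R$ is \tbalanced. By the equivalence between \tbalancedness\ and capture by a degree-$\le t$ multilinear polynomial, this gives $\NRD_n(R)=O(n^t)$. Testing \tbalancedness\ for a fixed $t$ and $r\le 5$ is a finite linear-algebra feasibility question. Because $R$ is symmetric, we can average any capturing polynomial over $S_r$. It then suffices to look for polynomials that are linear combinations of the elementary symmetric polynomials $e_0,\dots,e_t$ (for literals, of their counterparts under bit-flip). Such a polynomial is a function $p(w)$ of the Hamming weight. The capture condition becomes a small system of linear (in)equalities in the coefficients of a univariate polynomial of degree $\le t$ evaluated at $w\in\{0,\dots,r\}$, relative to the weights in $W$ versus the complement. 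This is checked by hand or by computer for each case. The trivial bound $O(n^r)$ always holds, and the affine (parity) predicates give $O(n)$ via $t=1$.

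\textbf{Lower bounds.} For each $W$ we apply Carbonnel's criterion with the largest $k$ for which it succeeds. That is, we exhibit a reduction from $\OR_k$ to $R$ in the sense of that framework, given by fixing some coordinates to constants and identifying or negating others. Since $\NRD_n(\OR_k)=\Omega(n^k)$, this gives $\Omega(n^k)$ for $R$. The lower-bound witnesses are concrete and symmetric: we choose how many positions of the $r$-tuple are set to $0$ or $1$ and how the remaining ones are grouped. The condition to verify is that the induced relation on $k$ free bits excludes exactly one point while being "OR-like" on the relevant set. This is a search over a constant-size space per predicate. For example, any $W$ missing a weight $w$ with $w-1\in W$ or $w+1\in W$ yields at least $\OR_1$, hence $\Omega(n)$.

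\textbf{Matching and the exceptional cases.} We tabulate, for every class, the least balanced degree $t$ and the greatest OR-degree $k$. We check that $t=k$ for all $r\le 4$ and for all $r=5$ classes except $\{0,2,3\}$ and $\{1,2,4\}$, together with their flips. For those two, the criteria give only $k=2$ (an $\OR_2$ reduction found by fixing constants) and $t=3$. We verify $t=3$ by producing a cubic weight-polynomial and showing that no quadratic works. The latter reduces to infeasibility of a small linear system, which we certify with a Farkas dual (a nonnegative combination of the weight constraints giving a contradiction). This yields $\Omega(n^2)\le\NRD_n(R)\le O(n^3)$.

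The main obstacle is the completeness of the negative checks. We must certify exactly, not merely by failed search, that no degree-$\le k$ polynomial captures $R$ whenever we claim the upper bound is $n^{k+1}$, and that an $\OR_k$ witness exists in the matching cases. Symmetrization makes the first problem a tiny LP per case, so the dual certificates are small. The real work is organizing the arity-$5$ table and presenting these certificates uniformly.
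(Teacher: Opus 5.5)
Your overall architecture matches the paper's. You enumerate weight sets up to bit-flip, get $O(n^t)$ from degree-$\le t$ capturing polynomials via $t$-balancedness and the Chen--Jansen--Pieterse theorem, and get $\Omega(n^k)$ from a single $R$-constraint on constants and literals of $x_1,\dots,x_k$ that rejects exactly one point. For Boolean $R$ this last condition is precisely failure of the universal $k$-cube operation, which is the paper's test.

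\textbf{The gap is your reduction of the upper-bound search to symmetric polynomials.} Capturing polynomials live over $\mathbb{Z}/p^a$. Averaging over $S_r$ requires dividing by $r!$ (or by $|\mathrm{Stab}(f)|=w!(r-w)!$), which is impossible when $p\le r$, and the undivided orbit sum can vanish at $f$. This is not merely an unproved shortcut; it gives the wrong exponent for predicates in the table. Take $r=4$, $W=\{0,1,3\}$. A symmetric $p=\alpha+\beta e_1+\gamma e_2$ vanishing on weights $0,1,3$ has $\alpha=\beta=0$ and $3\gamma=0$, hence $p(1111)=6\gamma=0$ in every ring. So your search reports $t=3$ against the lower bound $k=2$, and this predicate stays unresolved. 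The same happens for its flip $\{1,3,4\}$, and likewise for $W=\{0,1,3\}$ at $r=5$. Yet $x_1(x_2+x_3+x_4) \bmod 2$ vanishes on weights $0,1,3$ and equals $1$ at $1111$, and $e_2 \bmod 3$ captures the weight-$2$ tuples. So $R$ is $2$-balanced and $\nrdr=\Theta(n^2)$, as the paper's table records. You must test capture on the full degree-$t$ lift, i.e.\ check $\ext(\nu_t(f))\notin\mathrm{span}_{\mathbb Z}\{\ext(\nu_t(a)):a\in R\}$ for every $f\notin R$; the paper does this with Smith normal form. Symmetric polynomials are sufficient certificates, but not necessary ones.

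Separately, the ``negative checks'' you identify as the main obstacle are not needed. The statement only asserts matching bounds in the resolved cases and $\Omega(n^2)\le\nrdr\le O(n^3)$ for $R_{023},R_{124}$, so you never have to show that a smaller degree fails. If you did want such certificates, an LP/Farkas dual is the wrong object, because capture is a non-vanishing condition modulo $q$ rather than an inequality. The correct obstruction is an odd alternating sum, i.e.\ integer-span membership of $\ext(\nu_t(f))$ on the full lift. The example above shows that weight-level membership does not imply it.
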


The exact exponents for all predicates covered by the classification are reported in \Cref{tab:arity1to4-experiments,tab:arity5-experiment}. The classification is obtained by combining an algebraic upper-bound \critr with an algebraic lower-bound \critr. For a relation $R$, let $u(R)$ denote the exponent certified by our upper-bound \critr, and let $\ell(R)$ denote the exponent certified by the lower-bound \critr. Whenever $u(R)=\ell(R)$, the two \critrs give matching bounds $\nrdr=\Theta(n^{u(R)})$.

At arity $5$; as suggested by the \Cref{tab:arity5-experiment}; our framework leaves exactly two unresolved predicates:
$\wt{R_{023}}=\{0,2,3\}$ and $\wt{R_{124}}=\{1,2,4\}$. For both predicates, our lower-bound \critr yields only an $\Omega(n^2)$ lower bound, while our upper-bound \critr gives an $O(n^3)$ upper bound. Thus these are the first symmetric Boolean predicates that fall strictly between the current lower- and upper-bound frameworks. We further show in \Cref{sec:exceptional5} that understanding these predicates leads naturally to extremal set-system questions, which we view as a promising direction for future work.

The classification is obtained by combining two algebraic \critrs, one for upper bounds and one for lower bounds.

\paragraph{Upper bounds via lifted balancedness.}
% On the upper-bound side, our main conceptual contribution is a higher-degree version of the balancedness framework of \cite{chen2020best}. Their notion of balancedness captures the degree-$1$ polynomial phenomenon underlying linear sparsification. We extend this viewpoint by applying balancedness after the degree-$t$ monomial lift.
% \begin{theorem}[Informal upper-bound \critr]
% Let $R\subseteq\{0,1\}^r$ and let $t\ge 1$. Then $R$ is $t$-balanced if and only if every falsifying tuple $f\notin R$ is captured by a multilinear polynomial of degree at most $t$ over some prime-power ring. Consequently, every $n$-variable instance of $\CSP(R)$ has an equivalent subinstance with $O(n^t)$ constraints, and hence $\nrdr=O(n^t)$.
% \end{theorem}
On the upper-bound side, our main conceptual contribution is a lifted version of \emph{balancedness}. The notion of balancedness was introduced by Chen, Jansen, and Pieterse \cite{chen2020best} in their study of exact sparsification of Boolean $\CSP$s, where it captures the degree-$1$ polynomial phenomenon underlying linear sparsification. Our notion of $t$-balancedness extends this viewpoint to higher-degree multilinear monomial lifts. We show that $t$-balancedness exactly characterizes the existence of degree-$\le t$ polynomial that \emph{captures} the predicate, i.e., the polynomial that separates falsifying tuples from satisfying tuples, and therefore yields general $O(n^t)$ upper bounds on $\nrdr$.

\begin{restatable}[Upper-Bound \critr; informal \Cref{thm:t-balanced-main}]{theorem}{infupper}\label{thm:intro-tbalanced}
% \begin{theorem}
Let $R \subseteq \{0,1\}^r$ and let $t \ge 1$. Then $R$ is $t$-balanced if and only if every falsifying tuple $f \notin R$ is captured by a multilinear polynomial of degree at most $t$ over some \ring. Consequently, every $n$-variable instance of $\cspr$ has an equivalent subinstance with $O(n^t)$ constraints, and hence $\nrdr=O(n^t)$. 
% \end{theorem}
\end{restatable}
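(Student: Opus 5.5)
We first need the definition of $t$-balancedness. Following \cite{chen2020best}, $R$ is balanced if there is no integer combination $\sum_i c_i s_i = f$ of satisfying tuples $s_i\in R$ with $\sum_i c_i = 1$ and $f\notin R$. Here the $s_i$ and $f$ are viewed as vectors in $\Z^r$. We take the lift to be the degree-$t$ monomial map $\phi_t:\{0,1\}^r\to\{0,1\}^{M}$, where $M=\sum_{j\le t}\binom rj$ and the coordinates are indexed by subsets $S\subseteq[r]$ with $|S|\le t$ via $x\mapsto \prod_{i\in S}x_i$. The constant coordinate $S=\emptyset$ forces $\sum c_i=1$ automatically. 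We then say $R$ is $t$-balanced if $\phi_t(f)\notin \mathrm{span}_{\Z}\{\phi_t(s):s\in R\}$ for every $f\notin R$.

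\textbf{Step 1: $t$-balanced implies capturing polynomials.} Fix $f\notin R$, let $L=\mathrm{span}_\Z\phi_t(R)\subseteq\Z^M$, and let $v=\phi_t(f)\notin L$. We want a ring $A$ and a linear functional $\lambda:\Z^M\to A$ with $\lambda(L)=0$ and $\lambda(v)\neq 0$. Take $A=\Z^M/L$, which is a finitely generated abelian group, and let $\lambda$ be the quotient map. To obtain a commutative ring with identity, decompose the quotient as $\Z^a\oplus\bigoplus \Z/p^{e}$. Since $v$ has nonzero image, some cyclic component sees $v$ nonzero; projecting onto it gives $\lambda:\Z^M\to A$ with $A\in\{\Z,\Z/p^e\}$. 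The map $\lambda$ is given by coefficients $a_S\in A$, and $p(x)=\sum_S a_S\prod_{i\in S}x_i$ is multilinear of degree $\le t$ with $p(s)=0$ for $s\in R$ and $p(f)\ne0$, i.e.\ it captures $f$.

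\textbf{Step 1, converse.} If some $p$ over a ring $A$ captures $f$ and $\phi_t(f)=\sum c_i\phi_t(s_i)$ over $\Z$, then by linearity $p(f)=\sum c_i p(s_i)=0$, a contradiction. Hence capturing polynomials for every $f$ imply $t$-balancedness.

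\textbf{Step 2: sparsification.} Given an instance with constraints $(R,\bar x^{(j)})$, each capturing polynomial $p_f$ yields, for every constraint $j$, the polynomial $P_{f,j}(x)=p_f(x_{\bar x^{(j)}})$ on the $n$ variables. This is a multilinear polynomial over $A_f$ in the monomials of degree $\le t$, of which there are $O(n^t)$. An assignment $\sigma$ satisfies constraint $j$ iff $P_{f,j}(\sigma)=0$ for all $f\notin R$, since the tuple fails to be in $R$ only if it equals some $f$, and then $P_{f,j}(\sigma)\neq0$. Repeated variables are handled by the footnote's reduction, or directly since multilinearization of $x_i^2=x_i$ preserves degree.

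For each of the constantly many $f$, we keep a set $B_f$ of constraints whose coefficient vectors generate, over $A_f$, the submodule spanned by all the vectors $P_{f,j}$. Any satisfying assignment of the subinstance then zeroes every $P_{f,j}$, since each is an $A_f$-linear combination of kept ones, evaluated through the monomial vector of $\sigma$. Thus the subinstance is equivalent to the original.

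\textbf{Main obstacle: bounding $|B_f|$.} For $A=\Z/p^e$ or $\Z$ we need $O(M_n)=O(n^t)$ generators, where $M_n$ is the number of monomials of degree $\le t$. Over $\Z/p^e$ we follow \cite{chen2020best}: greedily add a vector only if it is not in the span of those already chosen. Each addition strictly enlarges the submodule, and a submodule of $(\Z/p^e)^{M_n}$ has size at most $p^{eM_n}$, so a chain of submodules has length at most $eM_n$.

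Over $\Z$, chains of submodules can be long, so we avoid $\Z$ entirely. Pick a prime $p$ not dividing $p(f)$, which is possible since $p(f)$ is a nonzero integer, and reduce the polynomial mod $p$; its values on $R$ remain $0$.

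Greedy selection is polynomial time, but only existence is needed for $\nrdr=O(n^t)$: a non-redundant instance equals its own equivalent subinstance, so $|I|\le\sum_f eM_n=O(n^t)$.
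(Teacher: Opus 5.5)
Your proof is correct. It follows the same overall architecture as the paper: lift by the monomials of degree $\le t$, show that balancedness is equivalent to linear capture in the lifted space, then sparsify. The difference is that you prove both ingredients directly, whereas the paper cites Chen--Jansen--Pieterse for them.

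\textbf{The definition.} The paper defines $t$-balancedness as closure of $R^{(t)}=\nu_t(R)$ under odd alternating sums that land in the lifted cube $X_t$. Your integer-span condition, with the constant coordinate built into $\phi_t=\ext\circ\nu_t$, is equivalent to this by \Cref{lem:bal-latt}. That lemma's proof just expands a combination with $\sum_i c_i=1$ into an alternating sum having one more positive than negative term. You should add this step so that your argument connects to the notion actually named in the statement.

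\textbf{The capture direction.} The paper proves a relative version of the CJP linear-capture theorem (\Cref{thm:linrel}) by appealing to CJP's Lemmas~4.3 and~4.6 on solvability of linear systems modulo prime powers. You instead apply the structure theorem to the finitely generated abelian group $\Z^M/L$ and project onto a cyclic summand that detects $\phi_t(f)$. If that summand is $\Z$, you then reduce modulo a prime not dividing the value. This is shorter, self-contained, and makes transparent why prime-power moduli always suffice. Your converse is the same linearity argument as the paper's.

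\textbf{The sparsification consequence.} The paper invokes CJP's Theorem~3.5 as a black box. You reprove the special case you need: for each $f$, keep a greedy generating subset of the $\Z/p^e$-span of the coefficient vectors, with chain length at most $eM_n$ in $(\Z/p^e)^{M_n}$. The paper's route is more modular and inherits the polynomial-time algorithm directly from CJP. Yours gives an explicit bound $\sum_f e_fM_n$ without a black box. Avoiding $\Z$ is not strictly necessary: the same argument works over $\mathbb Q$ with a basis of size at most $M_n$.

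Minor points: you use $p$ both for the capturing polynomial and for the prime. For the literal model, note that substituting $1-x_i$ or constants into $p_f$ keeps the degree at most $t$.
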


% The implication from low-degree capture to exact sparsification builds on the polynomial sparsification framework of Chen, Jansen, and Pieterse \cite{chen2020best}. Thus $t$-balancedness gives a general and effective upper-bound \critr for non-redundancy.

\paragraph{Lower bounds via universal cubes.}
% On the lower-bound side, we use Carbonnel's framework for proving non-redundancy lower bounds~\cite{Car22}. The relevant consequence is that if $R$ can define the $k$-ary $\OR$ relation through Carbonnel's fgpp-definitions, then $R$ inherits the $\Omega(n^k)$ non-redundancy lower bound of $\OR_k$. Equivalently, in the Boolean setting this is certified by failure of the universal $k$-cube partial polymorphism.
% \begin{theorem}[Informal lower-bound criterion]
% Let $R\subseteq\{0,1\}^r$ and let $2\le k\le r$. If the universal $k$-cube operation does not preserve $R$, equivalently if $\OR_k\in\langle R\rangle_{\mathrm{fg}}$, then $\nrdr=\Omega(n^k)$.
% \end{theorem}
On the lower-bound side, we use Carbonnel's \cite{carbonnel2022redundancy} algebraic framework for proving lower bounds on non-redundancy. At a high level, this framework gives a \critr for when a relation $R$ is expressive enough to encode the $k$-ary $\OR$ relation. Since $\OR_k$ has non-redundancy $\Omega(n^k)$, witnessed by placing one positive $\OR_k$ constraint on every $k$-subset of variables. Therefore, such an encoding immediately yields an $\Omega(n^k)$ lower bound on $\nrdr$. For details see \Cref{sec:lower-bound}.

\paragraph{Robustness to instance-model choices.}
Our classification concerns the exponent of $\nrdr$ and is unaffected by
standard choices in the constraint model. In \Cref{sec:positive-nrd}, we prove
that allowing unary maps on variables, constants, or repeated variables inside
constraints does not change the asymptotic non-redundancy, compared with the
model where each constraint is applied directly to a tuple of distinct variables.
The equivalence holds up to constant-factor changes in the number of variables.
The intuition is that, for Boolean predicates of fixed arity, literals,
constants, and repeated occurrences can be simulated by replacing each original
variable with a constant number of tagged copies. Hence our results are
independent of these modelling choices.
% \paragraph{Robustness to instance-model choices.}
% Our classification concerns the exponent of $\nrdr$ and is unaffected by
% standard choices in the constraint model. In \Cref{sec:positive-nrd}, we prove
% that allowing unary maps on variables, constants, or repeated variables inside
% constraints does not change the asymptotic non-redundancy, compared with the
% model where each constraint is applied directly to a tuple of distinct variables.
% The intuition is that, for Boolean predicates of fixed arity, literals,
% constants, and repeated occurrences can be simulated by replacing each original
% variable with a constant number of tagged copies. Hence our results are independent of these modelling choices.

% Very recently, Brakensiek, Guruswami, and Putterman used a similar experimental viewpoint to study low-arity Boolean predicates, and one of their remaining cases reduces to the symmetric arity-$5$ predicate with $\wt{R}=\{0,2,3\}$ that also appears here \cite{brakensiek2025richness}.\snote{It's okay this skip this here as it already appears in the abstract.}

\subsection{Organization}
In \Cref{sec:prelims}, we introduce basic notation, the \CSP\ model, symmetric Boolean predicates, and the different instance models used in the paper. In \Cref{sec:experiments}, we present the computational framework: the upper-bound \critr based on \tbalancedness{}, the lower-bound \critr based on the universal-cube test, the resulting classification up to arity $4$, and the two exceptional arity-$5$ predicates. In \Cref{sec:upper-bound}, we prove the upper-bound \critr by relating \tbalancedness{} to low-degree capturing polynomials and then applying the sparsification theorem of \cite{chen2020best}. In \Cref{sec:lower-bound}, we recall Carbonnel's\cite{carbonnel2022redundancy} lower-bound framework and proves the universal-cube lower-bound \critr used in the computations. In \Cref{sec:positive-nrd}, we prove the robustness of non-redundancy under standard instance-model choices, such as allowing unary maps, constants, and repeated variables in constraints. Finally, in \Cref{sec:conclusion}, we conclude with some open problems. 

\section{Preliminaries and notation}\label{sec:prelims}

% \snote{Condense everything to a single paragraph instead of starting a separate paragraph for each sentence.}
\subsection{Basic notation}

For a positive integer $n$, we write $[n] := \{1,2,\dots,n\}$. %We let $\Bool := \{0,1\}$. 
For a finite set $S$, we write $\Bool^S$ for the set of all Boolean vectors indexed by $S$, that is, all functions $x \colon S \to \Bool$. When $S = [n]$, we identify $\Bool^{[n]}$ with the usual Boolean cube $\Bool^n$.
For a vector $x \in \Bool^r$, we write $|x|$ for its Hamming weight, that is, the number of coordinates equal to $1$.
All asymptotic notation is with respect to the number $n$ of variables. The hidden constants may depend on fixed parameters such as the arity of the relation under consideration.

% \paragraph{Polynomials over Boolean variables.}

Throughout, all rings are commutative rings with identity. When evaluating
polynomials over a ring $E$ on Boolean inputs, we identify $0,1$ with
$0_E,1_E$. Integer coefficients and signs are interpreted through the
canonical map $\mathbb Z\to E$, $k\mapsto k\cdot 1_E$. By a prime-power
ring, we mean the residue ring $\mathbb Z/q\mathbb Z$, where $q$ is a power of some prime $p$.

We will use polynomials in variables $x_1,\ldots,x_r$ over such a ring $E$. Since these polynomials are only evaluated on Boolean inputs, every polynomial can be multilinearized using the identities $x_i^2=x_i$. Thus we may assume without loss of generality that all polynomials are multilinear, i.e., of the form $p(x)=\sum_{S\subseteq [r]} \alpha_S \prod_{i\in S}x_i$ with coefficients $\alpha_S\in E$. The degree of a multilinear polynomial is the maximum $|S|$ such that $\alpha_S\neq 0$.

\subsection{\textsf{CSPs}}

A constraint satisfaction problem is specified by a set of relations over a domain. Fix a finite domain $D$ and a set $\Gamma$ of $r$-ary relations over $D$. 
An instance of $\cspg$ consists of $n$ variables $V:=\{x_1, \ldots, x_n\}$ and a set $I$ of $m$ constraints. A constraint $C$ is defined by a pair $\{R,\bar x\}$, where $R\in\Gamma$ is an $r$-ary relation and $\bar x=(x_{i_1},\ldots,x_{i_r})\in V^r$ is an ordered tuple of variables. Repetitions inside $\bar x$ are allowed unless explicitly stated otherwise.
An assignment is a map $\sigma\colon V\to D$. Such an assignment satisfies the constraint $\{R,(x_{i_1},\ldots,x_{i_r})\}$ if $(\sigma(x_{i_1}),\ldots,\sigma(x_{i_r}))\in R$. We write $\Sat(I)$ for the set of satisfying assignments of an instance $I$.

Notice that above we assume that all relations in a fixed language $\Gamma$ have the same arity $r$. This is without loss of generality for fixed finite languages: if $R\subseteq D^k$ has arity $k<r$, we may replace it by the padded relation
$R' := R\times D^{r-k}\subseteq D^r$, where the extra coordinates are dummy coordinates. A constraint using $R$ can then be viewed as a constraint using $R'$ by adding arbitrary variables in the dummy positions.

% \paragraph{Boolean \CSPs.}\amati{i removed literals from here, since i discuss it in the later paragraph}
% In this paper we are interested in Boolean \CSPs, so from now on the domain will be $\Bool$ unless stated otherwise.
%A literal is either a variable $x$ or its negation $\neg x$. If $\sigma \colon V \to \Bool$ is an assignment, we extend it to literals in the standard way by setting $\sigma(\neg x) := 1-\sigma(x)$. Thus, if $R \subseteq \Bool^r$ is a Boolean relation and $(\ell_1,\dots,\ell_r)$ is an ordered $r$-tuple of literals, then the constraint $\{R,(\ell_1,\dots,\ell_r)\}$ is satisfied by $\sigma$ if $(\sigma(\ell_1),\dots,\sigma(\ell_r)) \in R$. We use the terms Boolean relation and Boolean \emph{predicate} interchangeably.%: a relation $R\subseteq\{0,1\}^r$ is identified with its indicator predicate $R:\{0,1\}^r\to\{0,1\}$.

\paragraph{Symmetric Boolean predicates and the \OR relation.}

From now on, when discussing \emph{Boolean relations}, the domain is $\{0,1\}$. We use the terms Boolean relation and \emph{Boolean predicate} interchangeably. Let $R\subseteq\{0,1\}^r$ be a Boolean relation. We say that $R$ is \emph{symmetric} if membership in $R$ depends only on Hamming weight. Equivalently, there is a set $\wtr \subseteq \{0,1,\dots,r\}$ such that $R = \{x \in \Bool^r : |x| \in \wt{R}\}$. Throughout the paper, we identify a symmetric relation with its accepted-weight set $\wtr$.

For $k \ge 2$, we write $\OR_k \subseteq \Bool^k$ for the $k$-ary \OR relation, that is, $\OR_k := \Bool^k \setminus \{0^k\}$. Equivalently, $(x_1,\dots,x_k)\in \OR_k$ if and only if at least one input bit is $1$. Equivalently, $\wt{\OR_k}=\{1,\ldots, k\}$.

Although one can define non-redundancy for arbitrary finite Boolean constraint
languages, it is enough, at the level of exponents, to understand singleton
languages $\Gamma=\{R\}$. Indeed, if $\Gamma=\{R_1,\ldots,R_s\}$ is fixed and
the singleton exponents are known, then the exponent for $\Gamma$ is the
maximum of the singleton exponents. The lower bound follows by using only one
relation, and the upper bound follows by sparsifying the constraints of each
relation separately. We therefore mostly work with a single symmetric relation
$R\subseteq \Bool^r$.

\paragraph{Equivalence and non-redundancy.}
Two instances $I$ and $I'$ are \emph{equivalent} if they have the same set of satisfying assignments, that is, if $\Sat(I)=\Sat(I')$.
Let $I$ be an instance and let $C$ be a constraint of $I$. We say that $C$ is \emph{redundant} if $I \setminus \{C\}$ is equivalent to $I$, that is, if deleting $C$ does not change the set of satisfying assignments. We say that $I$ is \emph{non-redundant} if none of its constraints is redundant.
The main parameter studied in this paper is the non-redundancy of a language. For a finite language $\Gamma$, define $\nrdg$ to be the maximum number of constraints in a non-redundant instance of $\cspg$ on $n$ variables.
When the language consists of a single relation $R$, we write $\Gamma_R := \{R\}$, $\cspr := \CSP(\Gamma_R)$, and $\nrdr := \nrdx{\Gamma_R}$.

We call a relation $R\subseteq \Bool^r$ \emph{non-trivial} if
$\emptyset\neq R\subsetneq \Bool^r$. The two trivial cases are easy to handle
separately: if $R=\Bool^r$, then every $R$-constraint is always satisfied and
hence redundant, so $\nrdr=0$; if $R=\emptyset$, then any instance containing at
least one $R$-constraint is unsatisfiable, and it is easy to see that
$\nrdr=1$. Thus, except where stated otherwise, we restrict attention to
non-trivial relations.

We also note that every non-trivial relation already has linear non-redundancy. Indeed, if $a \in R$ and $b \in \Bool^r \setminus R$, then by partitioning the variables into $\lfloor n/r \rfloor$ pairwise disjoint $r$-tuples and placing one copy of $R$ on each block, we obtain a non-redundant instance with $\lfloor n/r \rfloor$ constraints: to witness that a given block is essential, assign $b$ on that block and $a$ on every other block. Hence $\nrdr=\Omega(n)$ for every non-trivial $R$.

\paragraph{Bit-flip symmetry.}
For a Boolean relation $R\subseteq\{0,1\}^r$, let $\overline R:=\{1-x:x\in R\}$ be its image under global bit-flip. If $R$ is symmetric with accepted weights $W(R)$, then $\overline R$ has accepted weights $r-W(R):=\{r-w:w\in W(R)\}$. The relations $R$ and $\overline R$ have the same non-redundancy: replacing every $R$-constraint by the corresponding $\overline R$-constraint and complementing all Boolean assignments gives a bijection between satisfying assignments, preserving equivalence of subinstances and non-redundancy. Hence $\nrdr=\nrdx{\overline R}$.

\paragraph{Literals, positive, and simple positive models.}

We will use two closely related models of instances, and one useful restriction of the positive model.

\begin{enumerate}
    \item \emph{Literal model.} Fix a finite set $\mathcal U$ of unary maps $g:D\to D$ containing the identity map. Constraints may be applied to \emph{literals}, i.e., unary substitutions of variables: a constraint has the form
    $\{R,(g_1(x_{i_1}),\ldots,g_r(x_{i_r}))\}$,
    where $R\subseteq D^r$ and $g_1,\ldots,g_r\in\mathcal U$. Constant maps may be included in $\mathcal U$, so this model also allows constants. We call this the literal model because, in the Boolean case, taking $\mathcal U=\{\operatorname{id},\neg,0,1\}$ gives exactly the usual variables, negated variables, and constants.

    \item \emph{Positive model.} Constraints are applied only to variables: a constraint has the form
    $\{R,(x_{i_1},\ldots,x_{i_r})\}$,
    where the tuple $(x_{i_1},\ldots,x_{i_r})$ is ordered and repetitions are allowed. It is called \emph{simple} if no constraint repeats a variable. Thus, in a simple positive instance, every constraint has distinct variables. %The tuple is still ordered; however, when $R$ is symmetric, the order is immaterial, and we may identify a simple positive $r$-ary constraint with an $r$-element subset of variables.
\end{enumerate}

By \Cref{thm:positive-nrd}, for fixed finite domains and fixed arity, the literal, positive, and simple positive models have the same asymptotic non-redundancy up to a constant-factor blow-up in the number of variables. We will therefore move between them when convenient, explicitly indicating the model when it matters. Unless stated otherwise, $\cspr$ and $\nrdr$ refer to the literal model.
% For Boolean relations, we will use three closely related models of instances. In the \emph{literal model}, constraints may be applied to literals and constants: a constraint has the form $\{R,(\ell_1,\ldots,\ell_r)\}$, where each $\ell_i$ is a variable, a negated variable, or one of the constants $0,1$. In the \emph{positive model}, constraints are applied only to variables, so constraints have the form $\{R,(x_{i_1},\ldots,x_{i_r})\}$, where repetitions among the variables are allowed. In the \emph{simple positive model}, constraints are positive and the variables $x_{i_1},\ldots,x_{i_r}$ are required to be distinct. The tuple is still ordered; however, when $R$ is symmetric, the order is immaterial, and we may identify a simple positive $r$-ary constraint with an $r$-element subset of variables.

% By \Cref{thm:positive-nrd}, these three models have the same asymptotic non-redundancy up to a constant-factor blow-up in the number of variables. We will therefore move between these models when convenient, explicitly indicating the model when it matters. Unless stated otherwise, $\cspr$ and $\nrdr$ refer to the literal model.

\section{Computing non-redundancy of symmetric predicates of small arity}
    \subsection{Experimental methodology}\label{sec:experiments}

Our computational approach studies the lower-bound and upper-bound mechanisms separately.
For each symmetric Boolean relation $R$, we compute one algebraic witness for a lower bound on $\nrdr$ and one algebraic witness for an upper bound on $\nrdr$. When the two exponents match, this yields the exact growth rate of the non-redundancy. The full proofs appear later in \Cref{sec:lower-bound,sec:upper-bound}; here we state the statements and the intuition behind the two tests.

\subsubsection{Upper-bound \critr}

The upper-bound \critr is based on polynomial capturing. Let $R\subseteq \{0,1\}^r$ and let $f\in \{0,1\}^r\setminus R$ be a falsifying tuple. A polynomial $p$ over a \ringg{E} \emph{captures} $f$ with respect to $R$ if $p(y)=0$ for every $y\in R$, but $p(f)\neq 0$. Thus, a capturing polynomial separates one particular falsifying tuple from all satisfying tuples of $R$.

Capturing polynomials are useful for non-redundancy because of the sparsification theorem of Chen, Jansen, and Pieterse~\cite{chen2020best}: if every falsifying tuple $f\notin R$ can be captured by a polynomial of degree at most $t$, then every instance of $\cspr$ on $n$ variables has an equivalent subinstance with $O(n^t)$ constraints. Consequently, any non-redundant instance itself has size at most $O(n^t)$. Therefore, to prove $\nrdr=O(n^t)$, it is enough to certify that every falsifying tuple of $R$ has a degree-$\le t$ capturing polynomial.

We now introduce the degree-$t$ lift, whose purpose is to turn degree-$t$ multilinear polynomials into linear polynomials. Fix $t\ge 1$, and let $\mathcal S_{r,t}:=\{S\subseteq [r]:1\le |S|\le t\}$. The degree-$t$ lift maps each tuple $x\in\{0,1\}^r$ to the vector $\nu_t(x)\in\{0,1\}^{\mathcal S_{r,t}}$ whose $S$-coordinate is $\nu_t(x)_S=\prod_{i\in S}x_i$. Thus a degree-$t$ multilinear polynomial in $x_1,\ldots,x_r$ is exactly a linear polynomial in the lifted coordinates, together with a constant term.

\begin{example}\label{ex:nae-upper-1}
Consider $R=\NAE_3$, so $\wt{R}=\{1,2\}$. The falsifying tuples are $000$ and $111$. The polynomial
$p(x)=(x_1+x_2+x_3-1)(x_1+x_2+x_3-2)$
vanishes on all tuples of Hamming weight $1$ or $2$, and is nonzero on tuples of Hamming weight $0$ or $3$. Thus this degree-$2$ polynomial captures both falsifying tuples.

For $t=2$, the lift is $\nu_2(x)=(x_1,x_2,x_3,x_1x_2,x_1x_3,x_2x_3)$. The lifted relation $R^{(2)}$ consists of the six vectors obtained from the accepting tuples:
\[
\begin{array}{c|c}
x & \nu_2(x) \\ \hline
100 & (1,0,0,0,0,0) \\
010 & (0,1,0,0,0,0) \\
001 & (0,0,1,0,0,0) \\
110 & (1,1,0,1,0,0) \\
101 & (1,0,1,0,1,0) \\
011 & (0,1,1,0,0,1).
\end{array}
\]
The falsifying tuples lift to $\nu_2(000)=(0,0,0,0,0,0)$ and $\nu_2(111)=(1,1,1,1,1,1)$. After multilinearization, $p$ becomes
$p(x)=2-2(x_1+x_2+x_3)+2(x_1x_2+x_1x_3+x_2x_3)$,
which is a linear expression in the lifted coordinates. This illustrates why degree-$t$ capturing polynomials can be studied through linear conditions after the degree-$t$ lift.
\end{example}

Instead of explicitly searching for capturing polynomials in the lifted space, we use an equivalent closure condition called \emph{\tbalancedness}. Let $X_t:=\nu_t(\{0,1\}^r)$ be the lifted Boolean cube and let $R^{(t)}:=\nu_t(R)$ be the lifted relation. We say that $R$ is \tbalanced{} if $R^{(t)}$ is closed under odd alternating sums inside $X_t$: for every odd integer $m$, every $a^{(1)},\ldots,a^{(m)}\in R^{(t)}$, and every $u:=a^{(1)}-a^{(2)}+\cdots-a^{(m)}\in X_t$, we also have $u\in R^{(t)}$.
% The lift turns degree-$t$ multilinear polynomials in the original variables into linear polynomials in the lifted coordinates. Thus, after lifting, the question becomes whether the accepted lifted points can be separated from the rejected lifted points by linear polynomials. The following closure condition is exactly the condition that makes this linear separation possible.
% In general, the degree-$t$ lift turns degree-$t$ multilinear polynomials in the original variables into linear polynomials in the lifted coordinates. Thus the question becomes whether rejected lifted points can be separated from the accepted lifted points by linear polynomials. The following closure condition captures when such separation is possible.
% Let $X_t:=\nu_t(\Bool^r)$ be the lifted Boolean cube and $R^{(t)}:=\nu_t(R)$ be the lifted relation. We say that $R$ is \emph{\tbalanced} if the lifted set $R^{(t)}$ is closed under odd alternating sums inside $X_t$: for every odd integer $m$, $a^{(1)},\dots,a^{(m)}\in R^{(t)}$, and 
% $u:=a^{(1)}-a^{(2)}+\cdots-a^{(m-1)}+a^{(m)}$ belongs to $X_t$, we also have $u\in R^{(t)}$.
The significance of this definition is the following theorem, proved in Section~\ref{sec:upper-bound}.

% \begin{theorem}[Upper-bound \critr, informal \Cref{thm:t-balanced-main}]
% %Let $R\subseteq \Bool^r$ and let $t\ge 1$. Then, 
% For a relation $R\subseteq\{0,1\}^r$ and an integer $t\ge 1$, $R$ is $t$-balanced iff every $f\notin R$ has a degree-$\le t$ capturing polynomial over some \ring. Consequently, every instance of $\cspr$ on $n$ variables has an equivalent subinstance with $O(n^t)$ constraints, and hence $\nrdr=O(n^t)$.
% \end{theorem}

\infupper*

\paragraph{Implementation.}
We now describe how the upper-bound \critr is implemented. Fix a symmetric relation $R\subseteq \Bool^r$ and an integer $t\ge 1$. We first form the degree-$t$ lift $\nu_t(x)$ for every tuple $x\in\Bool^r$. This gives the lifted Boolean cube $X_t=\nu_t(\Bool^r)$ and the lifted accepting set $R^{(t)}=\nu_t(R)$. The goal is to test whether $R^{(t)}$ is balanced relative to $X_t$.

Directly checking the definition of $t$-balancedness would require searching over all odd alternating sums of tuples from $R^{(t)}$. Instead, we use the equivalent integer-span formulation proved in \Cref{lem:bal-latt}. For a lifted vector $v$, write $\ext(v):=(1,v)$. Then $R$ is $t$-balanced if and only if, for every $b\in X_t\setminus R^{(t)}$, the vector $\ext(b)$ does not lie in the integer span of the vectors $\{\ext(a):a\in R^{(t)}\}$.

Thus the computation reduces to a finite set of integer-span tests. We form the integer matrix $M_R^{(t)}$ whose columns are the vectors $\ext(a)$ for $a\in R^{(t)}$. For each $b\in X_t\setminus R^{(t)}$, we test whether the integer linear system $M_R^{(t)}z=\ext(b)$ has a solution $z$ over $\mathbb Z$. If such a solution exists for some $b$, then $R$ is not $t$-balanced. If no such $b$ lies in the integer span, then $R$ is $t$-balanced.

We perform these integer-span tests using Smith normal form. Namely, we compute the Smith normal form of $M_R^{(t)}$ and use it to decide membership in the integer span of its columns. This avoids explicitly enumerating alternating sums and instead uses the integer-span characterization of balancedness. For each symmetric relation $R$, we test $t=1,2,\dots,r$ and record the smallest value $u(R)$ for which the test succeeds. By the upper-bound \critr above, this certifies $\nrdr=O(n^{u(R)})$.

\begin{example}[Continuing \Cref{ex:nae-upper-1}]
The previous example showed the polynomial side of the lift. We now use the same relation to illustrate the integer-span test actually used in the implementation. 

The two rejected lifted points are $\nu_2(000)=(0,0,0,0,0,0)$ and $\nu_2(111)=(1,1,1,1,1,1)$. To test $2$-balancedness, we check whether either rejected point lies in the integer span of the accepted lifted points after adding a leading $1$ coordinate. For $b=\nu_2(000)$, membership would mean that there are integers $z_{100},z_{010},z_{001},z_{110},z_{101},z_{011}$ such that
\[
\sum_{a\in R^{(2)}} z_a \ext(a)=\ext(b)=(1,0,0,0,0,0,0).
\]
Looking at the three degree-$2$ coordinates forces $z_{110}=z_{101}=z_{011}=0$. Then the three degree-$1$ coordinates force $z_{100}=z_{010}=z_{001}=0$, contradicting the leading-coordinate equation $\sum_a z_a=1$. Hence $\ext(\nu_2(000))$ is not in the integer span.

Similarly, for $b=\nu_2(111)$, the three degree-$2$ coordinates force $z_{110}=z_{101}=z_{011}=1$. The degree-$1$ coordinates then force $z_{100}=z_{010}=z_{001}=-1$. But the leading-coordinate sum is then $-1-1-1+1+1+1=0$, not $1$. Hence $\ext(\nu_2(111))$ is also not in the integer span. Therefore neither rejected lifted point lies in the integer span, so the integer-span test certifies that $\NAE_3$ is $2$-balanced.
\end{example}

% For the implementation, we use an equivalent integer-span formulation proved in Section~\ref{sec:upper}. For $v\in\Bool^{\mathcal S_{r,t}}$, write $\ext(v):=(1,v)$. Then $R$ is $t$-balanced if and only if for every $b\in X_t\setminus R^{(t)}$, the vector $\ext(b)$ does not lie in the integer span of $\{\ext(a):a\in R^{(t)}\}$. We test this lattice-membership condition using Smith normal form. For each relation $R$, we define $u(R)$ to be the smallest $t$ for which this upper-bound \critr succeeds.

\subsubsection{Lower bound \critr}
% The lower-bound \critr is based on reducing from the $k$-ary \OR predicate. The predicate $\OR_k$ has non-redundancy $\Theta(n^k)$: for example, the instance containing one positive $\OR_k$ constraint on each $k$-subset of variables is non-redundant. Thus, if each $\OR_k$-constraint can be replaced by a constant-size gadget over $R$ in a way that preserves non-redundancy witnesses, then $\nrdr=\Omega(n^k)$.

The lower-bound \critr is based on defining the $k$-ary \OR predicate from $R$. The predicate $\OR_k$ has non-redundancy $\Theta(n^k)$. For example, the instance containing one positive $\OR_k$ constraint on each $k$-subset of variables is non-redundant. Thus, if $\OR_k$ is ``definable'' from $R$ by a constant-size gadget, then non-redundant $\OR_k$-instances can be translated into non-redundant $R$-instances, giving $\nrdr=\Omega(n^k)$.

\begin{example}
    The predicate $\mathsf{NAE}_3$ defines $\OR_2$ after pinning one coordinate to $0$: the constraint $\{\mathsf{NAE}_3,(x,y,0)\}$ is false exactly on $(x,y)=(0,0)$ and true otherwise. Thus it behaves as $\{\OR_2,(x,y)\}$. Since $\OR_2$ has non-redundant instances of size $\Omega(n^2)$, this gives an $\Omega(n^2)$ lower bound.
\end{example}

The formal version of this gadget notion is Carbonnel's~\cite{carbonnel2022redundancy} \emph{fgpp-definability}. Informally, a relation $S$ is fgpp-definable from $R$ if every $S$-constraint can be expressed using a constant number of $R$-constraints, possibly with auxiliary variables and simple unary functional guards; see \Cref{def:fgppdef}. In particular, each occurrence of $S$ can be replaced by a constant-size gadget over $R$ in a way that preserves the non-redundancy lower bound. We denote this as $S\in\langle \{R\}\rangle_{\mathrm{fg}}$ or in short $S\in\langle R\rangle_{\mathrm{fg}}$.% as shorthand for $S\in\langle \{R\}\rangle_{\mathrm{fg}}$.

% \amati{i tried to informally define fgpp here, and skip the rest of the things, is it better? I could define all these in the prelims, but it might be overloading to ahead of time}The formal notion of definability we use is Carbonnel's \cite{carbonnel2022redundancy} fgpp-definability. Informally, a relation $S$ is fgpp-definable from $R$ if an $S$-constraint can be expressed using a constant number of $R$-constraints, possibly with auxiliary variables and simple unary functional guards; see formal \Cref{def:fgppdef}. Equivalently, each occurrence of $S$ can be replaced by a constant-size gadget over $R$ in a way that preserves the non-redundancy lower bound. We denote this as $S\in\langle \{R\}\rangle_{\mathrm{fg}}$ or in short $S\in\langle R\rangle_{\mathrm{fg}}$.% as shorthand for $S\in\langle \{R\}\rangle_{\mathrm{fg}}$.

\begin{theorem}[Lower-bound \critr; informal \Cref{thm:carbonnel-lower-bound}]
Let $R\subseteq\Bool^r$ and let $2\le k\le r$. If $\OR_k\in\langle R\rangle_{\mathrm{fg}}$, then $\nrdr=\Omega(n^k)$.
\end{theorem}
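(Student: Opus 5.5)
The argument has two parts. First we establish that $\nrdx{\OR_k}=\Omega(n^k)$. Then we show that fgpp-definability transfers non-redundancy lower bounds with only a constant-factor loss in the number of variables.

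\emph{Base case.} Take $n$ variables and place one positive constraint $\{\OR_k,(x_{i_1},\ldots,x_{i_k})\}$ on every $k$-subset $\{i_1<\cdots<i_k\}\subseteq[n]$. This gives $\binom{n}{k}=\Omega(n^k)$ constraints. To see that a given constraint on $T$ is non-redundant, use the assignment that is $0$ on $T$ and $1$ elsewhere. It falsifies the constraint on $T$. Every other $k$-subset $T'\neq T$ contains an index outside $T$, so the constraint on $T'$ is satisfied. Hence deleting the constraint on $T$ enlarges $\Sat$, and the instance is non-redundant.

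\emph{Transfer.} We use \Cref{def:fgppdef}. Fix an fgpp-definition of $\OR_k$ from $R$. It consists of a constant number $c$ of $R$-constraints on the $k$ free variables together with $d$ auxiliary variables. The functional guards mean that each auxiliary variable is determined by a fixed unary or functional expression in the free variables, so every satisfying assignment of the free variables has a unique extension to the auxiliaries. Starting from the non-redundant $\OR_k$-instance $I$, we replace each constraint by a copy of the gadget. Auxiliary variables may need to be fresh for each gadget. However, if the guards force each auxiliary to be a unary function (or constant) of a single free variable, they can be shared per original variable. We then invoke \Cref{thm:positive-nrd} (literals and constants), so the number of variables is $O(n)$ in the literal model. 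This is the step to check most carefully against the definition. If the auxiliaries genuinely depend on several free variables, we would instead rely on the precise structure of \Cref{def:fgppdef}, which presumably rules this out so as to keep the variable count linear.

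\emph{Non-redundancy of the image.} The new instance $I'$ may contain up to $c|I|$ constraints, some of which may be redundant. For each original constraint $C$ on $T$, take the witness assignment $\sigma_T$ that falsifies only $C$ in $I$. Extend it canonically to the auxiliaries through the functional guards. Each other gadget is then satisfied by this extension, because the guard values are exactly the ones its defining formula accepts. The gadget for $C$ cannot be satisfied by any extension of $\sigma_T$. So some constraint inside the gadget for $C$ is falsified by the canonical extension. Since the other gadgets are satisfied, that extension falsifies constraints of the gadget for $C$ only.

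We now need one constraint per gadget that is essential. The functional (rather than existential) nature of the guards ensures that assignments of $I'$ restrict injectively to assignments of $I$. As a result, $\Sat(I')$ determines $\Sat(I)$. We then pass to a minimal equivalent subinstance $J\subseteq I'$. For each $T$, $J$ must contain at least one constraint of the gadget for $T$ that is violated by the canonical extension of $\sigma_T$; otherwise that extension would lie in $\Sat(J)=\Sat(I')$, a contradiction. The violated constraints for different $T$ are distinct, because each lies only in its own gadget (we make copies distinct when they coincide syntactically). Therefore $|J|\ge |I|=\Omega(n^k)$. Since $J$ is minimal, it is non-redundant, and it lives on $O(n)$ variables. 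Rescaling $n$ gives $\nrdr=\Omega(n^k)$.

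The main obstacle is this last step: guaranteeing that witnesses from different gadgets do not collapse, and that sharing auxiliaries keeps the variable count linear. The plan is to resolve both using the exact form of the functional guards in \Cref{def:fgppdef}.
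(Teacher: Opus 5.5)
Your argument is correct, but it is more direct than the paper's. The paper obtains the statement from Carbonnel's Lemma~15, whose hypothesis is that the universal $k$-cube operation $f_k$ does not preserve $R$. One reaches that hypothesis from $\OR_k\in\langle R\rangle_{\mathrm{fg}}$ through the duality of Carbonnel's Proposition~7: $f_k$ applied to the columns of $\OR_k$ returns the missing tuple $0^k$, so $f_k\notin\pppol(\{\OR_k\})\supseteq\pppol(\{R\})$. You instead prove directly the transfer principle underlying Carbonnel's framework, namely that fgpp-definitions preserve non-redundancy up to a constant-factor change in $n$, and apply it to the complete $\OR_k$-instance. This is elementary and self-contained. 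The paper's route has the advantage of tying the hypothesis to the finite cube test actually run in the experiments.

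The step you leave as ``presumably'' is exactly what the guards provide under the intended definition, in which every quantified variable is tied by a guard to a free variable. This requirement is essential. Without it, fgpp-definitions would contain all pp-definitions with constants, and the statement would fail for $\wt{R}=\{1\}$ in arity $3$, which pp-defines $\OR_3$ using constants yet has linear non-redundancy. With the requirement, each auxiliary equals $g(x_j)$ for a single free variable $x_j$. You may then substitute the literal $g(x_j)$, since all four unary maps on $\Bool$ are available in the literal model, so no auxiliary variables are needed at all.

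Finally, the ``distinct copies'' device is unnecessary. The canonical extension of $\sigma_T$ satisfies every other gadget, so a constraint it violates lies in no other gadget. This already makes your choice of one violated constraint of $J$ per $T$ injective.
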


\paragraph{Implementation.}
% We do not search directly for an fgpp-definition of $\OR_k$ from $R$. Instead, we use Carbonnel's \cite{carbonnel2022redundancy} equivalent preservation \critr: $\OR_k\in\langle R\rangle_{\mathrm{fg}}$ if and only if a particular partial operation, called the universal $k$-cube operation, fails to preserve $R$. We denote this operation by $f_k$.

% Here preservation means the usual coordinatewise condition. Arrange tuples from $R$ as the rows of a matrix. If every column lies in the domain of $f_k$, then applying $f_k$ to each column gives an output tuple. The operation $f_k$ preserves $R$ if this output tuple always lies in $R$. Therefore a failure of preservation consists of input rows all lying in $R$ whose coordinatewise image under $f_k$ lies outside $R$.

We do not search directly for an fgpp-definition of $\OR_k$ from $R$. Instead, we use Carbonnel's \cite{carbonnel2022redundancy} equivalent preservation \critr: $\OR_k\in\langle R\rangle_{\mathrm{fg}}$ if and only if a certain partial operation, called the universal $k$-cube operation, does not preserve $R$. We denote this partial operation by $f_k$. Here ``preserve'' has the usual polymorphism meaning: if we arrange tuples from $R$ as rows of a matrix and apply $f_k$ coordinatewise, then the resulting tuple should again lie in $R$, whenever $f_k$ is defined on every column. Thus, to prove a lower bound, the code searches for a failure of preservation: input rows all lying in $R$ whose coordinatewise image under $f_k$ lies outside $R$.

Fix $k\ge 2$, and let $m=2^k-1$. Index the $m$ input rows by the nonzero strings $v\in\Bool^k\setminus\{0^k\}$. For each coordinate $i\in[k]$, define the column $c_i\in\Bool^m$ by $(c_i)_v=v_i$, where $v_i$ is the $i$th bit of $v$. The universal $k$-cube partial operation, denoted $f_k$, is the partial operation of arity $m$ whose domain is
$\{0^m,1^m\}\cup\{c_i,1-c_i:i\in[k]\}$,
and whose values are given by
$f_k(0^m)=0$, $f_k(1^m)=1$, $f_k(c_i)=0$, and $f_k(1-c_i)=1$ for every $i\in[k]$. See \Cref{ex:kcube}.

Concretely, to test whether $f_k$ preserves an $r$-ary relation $R$, choose $m$ tuples from $R$, one for each nonzero string $v\in\Bool^k\setminus\{0^k\}$, and place them as rows of an $m\times r$ matrix. If every column lies in the domain of $f_k$, then applying $f_k$ coordinatewise gives an output tuple in $\Bool^r$. A failure of preservation is exactly the case where all input rows lie in $R$, but this output tuple lies outside $R$.

\begin{example}\label{ex:kcube}
    For example, when $k=2$, the input rows are indexed by $11,10,01$. The operation $f_2$ has arity $3$ and is defined on the following column types:
\[
\begin{array}{c|c}
\text{column} & f_2(\text{column}) \\ \hline
000 & 0\\
111 & 1\\
110 & 0\\
001 & 1\\
101 & 0\\
010 & 1.
\end{array}
\]
Here $110$ and $101$ are the two coordinate columns of the nonzero $2$-cube, and $001$ and $010$ are their complements.
% \end{example}

% \begin{example}
As a concrete failure, take $R=\NAE_3$, so $\wt{R}=\{1,2\}$. Consider the following three input rows, indexed by $11,10,01$:
\[
\begin{array}{c|ccc|c}
 & \text{coord. }1 & \text{coord. }2 & \text{coord. }3 & \text{weight}\\ \hline
11 & 1 & 1 & 0 & 2\\
10 & 1 & 0 & 0 & 1\\
01 & 0 & 1 & 0 & 1.
\end{array}
\]
All three rows belong to $\NAE_3$. Now look at the columns: they are $110$, $101$, and $000$. By the table above, $f_2$ maps these columns to $0$, $0$, and $0$, respectively. Hence the coordinatewise output is $000$, which does not belong to $\NAE_3$. Therefore $f_2$ does not preserve $\NAE_3$, and thus gives the lower bound $\nr{\NAE_3}{n}=\Omega(n^2)$.
\end{example}

For symmetric relations, the implementation can simplify this preservation test. Since membership in $R$ depends only on Hamming weight, the actual positions of the column types do not matter; only their multiplicities matter. Thus, for fixed $R\subseteq\{0,1\}^r$ and $k$, the code enumerates all multiplicity vectors over the domain column types of $f_k$, with total multiplicity $r$. Each multiplicity vector determines the Hamming weights of all input rows and of the output row. The lower-bound \critr succeeds if every input-row weight lies in $\wt{R}$, while the output-row weight lies outside $\wt{R}$.

\begin{example}
Consider again the case $k=2$. The column types in the domain of $f_2$ are
$000,111,110,001,101,010$, with outputs $0,1,0,1,0,1$, respectively. Suppose $R=\NAE_3$, so $\wt{R}=\{1,2\}$. Since $R$ has arity $3$, a multiplicity vector assigns three coordinates among these six column types.

Take the multiplicity vector that uses $110$ once, $101$ once, and $000$ once, and uses all other column types zero times. The three input rows then have weights
$2,1,1$: the row indexed by $11$ sees the entries $1,1,0$, the row indexed by $10$ sees $1,0,0$, and the row indexed by $01$ sees $0,1,0$. All these weights lie in $\wt{R}=\{1,2\}$. The output row has entries $0,0,0$, hence output weight $0$, which does not lie in $\wt{R}$. Therefore this multiplicity vector witnesses failure of preservation, and the lower-bound \critr succeeds for $R=\NAE_3$ and $k=2$.
\end{example}

For each symmetric relation $R$, we run this test for $k=2,\dots,r$ and define $\ell(R)$ to be the largest $k$ for which the test succeeds, with the convention $\ell(R)=1$ if no such $k$ is found. By the lower-bound \critr above, this certifies $\nrdr=\Omega(n^{\ell(R)})$.

\paragraph{Enumerating all symmetric predicates.}
The code used for the computational experiments is publicly available at
\url{https://github.com/aaysharma/nrd_csps.git}.
Given an arity $r$, the code enumerates all symmetric Boolean $r$-ary relations, represented by their accepted-weight sets $\wt{R}\subseteq \{0,1,\dots,r\}$. For each relation, it computes the upper-bound exponent $u(R)$ certified by $t$-balancedness and the lower-bound exponent $\ell(R)$ certified by the universal $k$-cube test.

We enumerate relations only up to the bit-flip symmetry defined in
\Cref{sec:prelims}. For a symmetric arity-$r$ relation with accepted weights
$\wt{R}$, this symmetry sends $\wt{R}$ to
$r-\wt{R}:=\{r-w:w\in\wt{R}\}$ and preserves non-redundancy. Hence
$\nrdr=\nr{r-\wt{R}}{n}$, so it suffices to list one representative from each
pair $\wt{R}$ and $r-\wt{R}$. Modulo bit-flip symmetry, the number of relations is
$2^r+2^{\lceil (r-1)/2\rceil}$.

\subsection{A complete classification up to arity \texorpdfstring{$4$}{4}}
In all cases covered by our framework, the lower-bound exponent certified by the universal $k$-cube test matches the upper-bound exponent obtained from \tbalancedness, see \Cref{tab:arity1-experiment,tab:arity2-experiment,tab:arity3-experiment,tab:arity4-experiment}. However, for the arity-$5$ predicates we get a mismatch between the upper bound $O(n^3)$ and lower bound $\Omega(n^2)$, see \Cref{tab:arity5-experiment}. The specific \CSPs that have a gap and our attempt to prove tight bounds for those are discussed in \Cref{sec:exceptional5}.

\begin{table}[ht]
\centering
\scriptsize
\setlength{\tabcolsep}{3pt}
\renewcommand{\arraystretch}{0.92}

\begin{subtable}[t]{0.48\textwidth}
\centering
\begin{tabular}{r l r r r c}
\toprule
idx & $W$ & $|R|$ & $u(R)$ & $\ell(R)$ & mismatch \\
\midrule
1 & $\{0\}$ & 1 & 1 & 1 & no \\
\bottomrule
\end{tabular}
\caption{arity $r=1$.}
\label{tab:arity1-experiment}
\end{subtable}
\hfill
\begin{subtable}[t]{0.48\textwidth}
\centering
\begin{tabular}{r l r r r c}
\toprule
idx & $W$ & $|R|$ & $u(R)$ & $\ell(R)$ & mismatch \\
\midrule
1 & $\{0\}$ & 1 & 1 & 1 & no \\
2 & $\{1\}$ & 2 & 1 & 1 & no \\
3 & $\Bool$ & 3 & 2 & 2 & no \\
4 & $\{0,2\}$ & 2 & 1 & 1 & no \\
\bottomrule
\end{tabular}
\caption{arity $r=2$.}
\label{tab:arity2-experiment}
\end{subtable}

\vspace{0.75em}

\begin{subtable}[t]{0.48\textwidth}
\centering
\begin{tabular}{r l r r r c}
\toprule
idx & $W$ & $|R|$ & $u(R)$ & $\ell(R)$ & mismatch \\
\midrule
1 & $\{0\}$ & 1 & 1 & 1 & no \\
2 & $\{1\}$ & 3 & 1 & 1 & no \\
3 & $\Bool$ & 4 & 2 & 2 & no \\
4 & $\{0,2\}$ & 4 & 1 & 1 & no \\
5 & $\{1,2\}$ & 6 & 2 & 2 & no \\
6 & $\{0,1,2\}$ & 7 & 3 & 3 & no \\
7 & $\{0,3\}$ & 2 & 1 & 1 & no \\
8 & $\{0,1,3\}$ & 5 & 2 & 2 & no \\
\bottomrule
\end{tabular}
\caption{arity $r=3$.}
\label{tab:arity3-experiment}
\end{subtable}
\hfill
\begin{subtable}[t]{0.48\textwidth}
\centering
\begin{tabular}{r l r r r c}
\toprule
idx & $W$ & $|R|$ & $u(R)$ & $\ell(R)$ & mismatch \\
\midrule
1 & $\{0\}$ & 1 & 1 & 1 & no \\
2 & $\{1\}$ & 4 & 1 & 1 & no \\
3 & $\Bool$ & 5 & 2 & 2 & no \\
4 & $\{2\}$ & 6 & 1 & 1 & no \\
5 & $\{0,2\}$ & 7 & 2 & 2 & no \\
6 & $\{1,2\}$ & 10 & 2 & 2 & no \\
7 & $\{0,1,2\}$ & 11 & 3 & 3 & no \\
8 & $\{0,3\}$ & 5 & 1 & 1 & no \\
9 & $\{1,3\}$ & 8 & 1 & 1 & no \\
10 & $\{0,1,3\}$ & 9 & 2 & 2 & no \\
11 & $\{0,2,3\}$ & 11 & 2 & 2 & no \\
12 & $\{1,2,3\}$ & 14 & 3 & 3 & no \\
13 & $\{0,1,2,3\}$ & 15 & 4 & 4 & no \\
14 & $\{0,4\}$ & 2 & 1 & 1 & no \\
15 & $\{0,1,4\}$ & 6 & 2 & 2 & no \\
16 & $\{0,2,4\}$ & 8 & 1 & 1 & no \\
17 & $\{0,1,2,4\}$ & 12 & 3 & 3 & no \\
18 & $\{0,1,3,4\}$ & 10 & 2 & 2 & no \\
\bottomrule
\end{tabular}
\caption{arity $r=4$.}
\label{tab:arity4-experiment}
\end{subtable}

\caption{Outputs of the symmetric-relation experiments for arities $r=1,2,3,4$, excluding the trivial relations $\emptyset,\Bool^r$. The upper bounds $u(R)$ and lower bounds $\ell(R)$ match.}
\label{tab:arity1to4-experiments}
\end{table}

\begin{table}[ht]
\centering
\scriptsize

\begin{minipage}[t]{0.49\textwidth}
\centering
\begin{tabular}{r l r r r c}
\toprule
idx & $W$ & $|R|$ & $u(R)$ & $\ell(R)$ & mismatch \\
\midrule
% 0  & $\emptyset$        & 0  & 1 & 1 & no \\
1  & $\{0\}$            & 1  & 1 & 1 & no \\
2  & $\{1\}$            & 5  & 1 & 1 & no \\
3  & $\Bool$          & 6  & 2 & 2 & no \\
4  & $\{2\}$            & 10 & 1 & 1 & no \\
5  & $\{0,2\}$          & 11 & 2 & 2 & no \\
6  & $\{1,2\}$          & 15 & 2 & 2 & no \\
7  & $\{0,1,2\}$        & 16 & 3 & 3 & no \\
8  & $\{0,3\}$          & 11 & 1 & 1 & no \\
9  & $\{1,3\}$          & 15 & 2 & 2 & no \\
10 & $\{0,1,3\}$        & 16 & 2 & 2 & no \\
11 & $\{2,3\}$          & 20 & 2 & 2 & no \\
\textcolor{red}{12} & $\textcolor{red}{\{0,2,3\}}$        & \textcolor{red}{21} & \textcolor{red}{3} & \textcolor{red}{2} & \textcolor{red}{yes} \\
13 & $\{1,2,3\}$        & 25 & 3 & 3 & no \\
14 & $\{0,1,2,3\}$      & 26 & 4 & 4 & no \\
15 & $\{0,4\}$          & 6  & 1 & 1 & no \\
16 & $\{1,4\}$          & 10 & 1 & 1 & no \\
17 & $\{0,1,4\}$        & 11 & 2 & 2 & no \\
\bottomrule
\end{tabular}
\end{minipage}
\hfill
\begin{minipage}[t]{0.49\textwidth}
\centering
\begin{tabular}{r l r r r c}
\toprule
idx & $W$ & $|R|$ & $u(R)$ & $\ell(R)$ & mismatch \\
\midrule
18 & $\{0,2,4\}$        & 16 & 1 & 1 & no \\
\textcolor{red}{19} & $\textcolor{red}{\{1,2,4\}}$        & \textcolor{red}{20} & \textcolor{red}{3} & \textcolor{red}{2} & \textcolor{red}{yes} \\
20 & $\{0,1,2,4\}$      & 21 & 3 & 3 & no \\
21 & $\{0,3,4\}$        & 16 & 2 & 2 & no \\
22 & $\{0,1,3,4\}$      & 21 & 2 & 2 & no \\
23 & $\{0,2,3,4\}$      & 26 & 3 & 3 & no \\
24 & $\{1,2,3,4\}$      & 30 & 4 & 4 & no \\
25 & $\{0,1,2,3,4\}$    & 31 & 5 & 5 & no \\
26 & $\{0,5\}$          & 2  & 1 & 1 & no \\
27 & $\{0,1,5\}$        & 7  & 2 & 2 & no \\
28 & $\{0,2,5\}$        & 12 & 2 & 2 & no \\
29 & $\{0,1,2,5\}$      & 17 & 3 & 3 & no \\
30 & $\{0,1,3,5\}$      & 17 & 2 & 2 & no \\
31 & $\{0,2,3,5\}$      & 22 & 2 & 2 & no \\
32 & $\{0,1,2,3,5\}$    & 27 & 4 & 4 & no \\
33 & $\{0,1,4,5\}$      & 12 & 2 & 2 & no \\
34 & $\{0,1,2,4,5\}$    & 22 & 3 & 3 & no \\
% 35 & $\{0,1,2,3,4,5\}$  & 32 & 1 & 1 & no \\
\bottomrule
\end{tabular}
\end{minipage}

\caption{Output of the symmetric-relation experiment for arity $r=5$ excluding the trivial relations $\emptyset, \Bool^5$. The only mismatches are $\wt{R_{023}}=\{0,2,3\}$ and $\wt{R_{124}}=\{1,2,4\}$, as shown in red.}
\label{tab:arity5-experiment}
\end{table}

    \subsection{The exceptional arity-\texorpdfstring{$5$}{5} predicates}\label{sec:exceptional5}

As shown in \Cref{tab:arity5-experiment}, the only symmetric Boolean predicates of arity $5$ not resolved by our current lower- and upper-bound \critrs are
$\wt{R_{023}}=\{0,2,3\}$ and $\wt{R_{124}}=\{1,2,4\}$. For both predicates, the \OR-reduction lower-bound \critr certifies only an $\Omega(n^2)$ lower bound, while the $t$-balancedness upper-bound \critr gives an $O(n^3)$ upper bound. Thus these two predicates mark the first point in our classification where the two \critrs fail to meet. Resolving them therefore requires going beyond at least one of the two techniques used in the rest of the paper. In this subsection, we describe our attempts to do so and explain how the remaining obstruction leads naturally to an extremal set-system problem. As proved in \Cref{thm:positive-nrd}, we work with simple positive instances up to a constant-factor change in the number of variables. %Since the predicates here are symmetric, the order of the five variables in a simple constraint is immaterial. Thus we identify each simple positive arity-$5$ constraint with a $5$-element set $A\in\binom{[n]}{5}$.

\paragraph{Conditional \NRD.}
It is useful to compare ordinary non-redundancy with a conditional version of the problem, as studied in \cite{brakensiek2025redundancy,brakensiek2025richness}. Let $R\subseteq S\subseteq D^r$ be relations over domain $D$. An $R$-instance $I$ is \emph{$S$-conditionally non-redundant} if for every constraint $C=\{R,(\bar x)\}\in I$, there exists an assignment $\sigma$ such that $\sigma$ satisfies every constraint in $I\setminus\{C\}$, while the tuple induced by $\bar x$ under $\sigma$ lies in $S\setminus R$. Equivalently, $\sigma$ satisfies all other $R$-constraints and satisfies the relaxed constraint $\{S,(\bar x)\}$, but falsifies $C$. We write $\nrdx{R\mid S}$ for the maximum number of constraints in an $S$-conditionally non-redundant $R$-instance on $n$ variables.

The quantity $\nrdx{R\mid S}$ is useful because of the triangle inequality proved by \cite{brakensiek2025redundancy}:
\[
    \nrdx{R}\le \nrdx{R\mid S}+\nrdx{S}.
\]
For the predicate $R_{124}$, let $R_{1245}$ be the arity-$5$ symmetric predicate with $\wt{R_{1245}}=\{1,2,4,5\}$.
Then
\[
    \nrdx{R_{124}\mid R_{1245}}
    \le
    \nrdx{R_{124}}
    \le
    \nrdx{R_{124}\mid R_{1245}}+\nrdx{R_{1245}}.
\]
Our upper-bound framework gives $\nrdx{R_{1245}}=O(n^2)$; see \Cref{tab:arity5-experiment}. Hence, once one has the quadratic lower bound $\nrdx{R_{124}\mid R_{1245}}=\Omega(n^2)$, the original $\nrdx{R_{124}}$ and the conditional \nrdx{R_{124}\mid R_{1245}} agree up to constant factors whenever the conditional quantity is superquadratic. Thus, for understanding whether the true growth is closer to $n^2$ or $n^3$, the conditional problem is equivalent to the original problem.

\paragraph{Pairwise-intersection construction.}
A first attempt to lower-bound $\nrdx{R_{124}\mid R_{1245}}$ is the following pairwise-intersection construction. Let $\mathcal F\subseteq \binom{[n]}{5}$ be a family such that
\[
    |A\cap B|\in \{1,2,4\}
    \qquad\text{for all distinct } A,B\in\mathcal F.
\]
Such a family gives an $R_{1245}$-conditionally non-redundant simple positive $\CSP(R_{124})$ instance: introduce one constraint for each $A\in\mathcal F$. To witness that the constraint indexed by $A$ is essential, take the assignment that sets exactly the variables of $A$ to $1$ and all other variables to $0$. The constraint $A$ itself has value $|A\cap A|=5$, which lies in $\wt{R_{1245}}\setminus \wt{R_{124}}$, while every other constraint $B\neq A$ has value $|A\cap B|\in \{1,2,4\}$ and is therefore satisfied as an $R_{124}$-constraint.

This construction gives the desired quadratic lower bound. However, it cannot yield a superquadratic lower bound. As remarked in \cite{brakensiek2026classification}, a classical theorem of Deza, Erdős, and Frankl \cite{deza1978intersection} on uniform set systems with restricted intersection sizes implies that every such pairwise-intersection family has size $O(n^2)$; matching quadratic examples give $\Theta(n^2)$. Thus the pairwise-intersection approach explains the current $\Omega(n^2)$ lower bound but cannot close the gap to the $O(n^3)$ upper bound.

\paragraph{Generalized witness-set problem.}
The natural next step is to allow more general witnesses. For $W\subseteq\{0,1,\dots,5\}$, define $g_W(n)$ to be the maximum size of a family $\mathcal F\subseteq \binom{[n]}{5}$ such that for every $A\in\mathcal F$ there exists a set $X_A\subseteq[n]$ satisfying
\[
    |A\cap X_A|\notin W
    \qquad\text{and}\qquad
    |B\cap X_A|\in W
    \quad\text{for every }B\in\mathcal F\setminus\{A\}.
\]
The pairwise-intersection construction is the special case $X_A=A$ for every $A$. More importantly, for simple positive instances, where each constraint uses five distinct variables, this generalized witness-set problem is exactly the positive non-redundancy problem for the symmetric predicate $W$. Indeed, a family $\mathcal F$ gives an instance with one constraint for each $A\in\mathcal F$, and $X_A$ is precisely the assignment witnessing that the constraint $A$ is essential. Conversely, any simple positive non-redundant instance of a symmetric $5$-ary predicate gives such a family and such witnesses. 

For the conditional problem, the corresponding extremal quantity has the same form, except that the bad value for the distinguished constraint must lie in the relaxed set $S\setminus R$. Thus, for $\nrdx{R_{124}\mid R_{1245}}$, the condition becomes
\[
    |A\cap X_A|=5
    \qquad\text{and}\qquad
    |B\cap X_A|\in \{1,2,4\}
    \quad\text{for every }B\neq A.
\]
Thus, resolving $R_{124}$ is exactly\footnote{Recall that \nrd of the simple positive model is equivalent, up to constant factors, to the most general model; see \Cref{thm:positive-nrd}} the generalized witness-set problem above. The same discussion applies to $R_{023}$ by taking $R_{0235}$ with $\wt{R_{0235}}=\{0,2,3,5\}$; the corresponding conditional witness condition is $|A\cap X_A|=5$ and $|B\cap X_A|\in\{0,2,3\}$ for every $B\neq A$.

\section{Upper Bound}\label{sec:upper-bound}

In this section we prove the upper bound on the non-redundancy of \tbalanced{} Boolean relations. The proof has two ingredients.

First, in \Cref{subsec:t-balancedness}, we define \tbalancedness{} via a degree-$t$ monomial lift and prove that it is equivalent to the existence of degree-$\le t$ multilinear capturing polynomials over rings, see \Cref{thm:t-balanced-main}. The proof reduces the degree-$t$ statement to the degree-$1$ balancedness of \cite{chen2020best} in the lifted space. The key intermediate step is a relative version of their linear-capture theorem, stated as \Cref{thm:linrel}.

Second, in \Cref{subsec:main-nrd-upper-bound}, we combine \Cref{thm:t-balanced-main} with the general polynomial sparsification theorem of Chen, Jansen, and Pieterse \cite{chen2020best}, stated as \Cref{thm:cjp-general-capture}, to obtain an $O(n^t)$ sparsification for $\cspr$ whenever $R$ is \tbalanced. The desired non-redundancy upper bound $\nrdr=O(n^t)$ then follows immediately, since a non-redundant instance cannot be equivalent to a proper subinstance, see \Cref{thm:balanced-nrd}.

\subsection{\texorpdfstring{$t$}{t}-balancedness and capturing polynomials}
\label{subsec:t-balancedness}

We begin by defining the relative notion of balancedness in a set.

\begin{definition}[Relative balancedness]
Let $A \subseteq X \subseteq \Bool^N$. We say that $A$ is \emph{balanced relative to $X$} if for every odd integer $m \ge 1$ and every tuple $a^{(1)},a^{(2)},\dots,a^{(m)} \in A$, it holds that if the coordinatewise alternating sum $u := a^{(1)} - a^{(2)} + a^{(3)} - \cdots - a^{(m-1)} + a^{(m)}$ belongs to $X$, then $u \in A$.
\end{definition}

We now define the degree-$t$ lift.

% \begin{definition}[$t$-lift and \tbalancedness]\label{def:tbal}
% Fix integers $r \ge 1$ and $t \ge 1$ and let
% the family of subsets of $[r]$ of size at most $t$ be denoted as
% $
% \mathcal{S}_{r,t} := \{ S \subseteq [r] : 1 \le |S| \le t\}.
% $. The degree-$t$ lift\ is the map \amatr{is this clear enough that ${0,1}^S$ is a hypercube of dim $|S|$, indexed by elements of $S$. defined in \Cref{sec:prelims}}
% \[ 
% \nu_t : \Bool^r \to \Bool^{\mathcal{S}_{r,t}}
% \qquad\text{defined by}\qquad
% \nu_t(x)_S := \prod_{i \in S} x_i.
% \]
% For a relation $R \subseteq \Bool^r$, let $X_t := \nu_t(\Bool^r),$ be the lifted Boolean Cube and $ R^{(t)} := \nu_t(R)$ be the lifted relation. We say that $R$ is \tbalanced{} if $R^{(t)}$ is balanced relative to $X_t$.
% \end{definition}

\begin{definition}[$t$-lift and \tbalancedness]\label{def:tbal}
Fix integers $r\ge 1$ and $t\ge 1$, and let
$\mathcal S_{r,t}:=\{S\subseteq [r]:1\le |S|\le t\}$. 
The degree-$t$ lift is the map
$\nu_t:\{0,1\}^r\to \{0,1\}^{\mathcal S_{r,t}}$
defined by $\nu_t(x)_S:=\prod_{i\in S}x_i$.

For a relation $R\subseteq\{0,1\}^r$, let
$X_t:=\nu_t(\{0,1\}^r)$ and $R^{(t)}:=\nu_t(R)$.
We say that $R$ is \emph{\tbalanced} if $R^{(t)}$ is balanced relative to $X_t$.
\end{definition}

Next, we formalize the polynomial notion used in the upper-bound \critr.

\begin{definition}[Capturing polynomial]
Let $R \subseteq \Bool^r$ and let $f \in \Bool^r \setminus R$.
A polynomial $p$ over a ring $E$ \emph{captures $f$ with respect to $R$} if
\[
p(y)=0 \quad \text{for all } y \in R,
\qquad\text{and}\qquad
p(f)\neq 0.
\]
Equivalently, we may write $p\ \captures_R\ f$.
\end{definition}

The next lemma reformulates relative balancedness as an integer-span condition. 

\begin{lemma}\label{lem:bal-latt}
Let $A \subseteq X \subseteq \Bool^N$, and define $\mathrm{ext}(v) := (1,v) \in \Z^{N+1}.$
Then the following are equivalent.
\begin{enumerate}
    \item $A$ is balanced relative to $X$.
    \item For every $u \in X$, if $\mathrm{ext}(u) \in \mathrm{span}_{\Z}\bigl(\{\mathrm{ext}(a): a \in A\}\bigr),$
    then $u \in A$.
\end{enumerate}
\end{lemma}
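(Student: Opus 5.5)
The plan is to prove both directions by translating between odd alternating sums and integer combinations whose coefficients sum to $1$. The key observation is that the leading coordinate of $\mathrm{ext}$ records the sum of the coefficients. An integer combination $\sum_{a} z_a\,\mathrm{ext}(a)$ equals $\mathrm{ext}(u)=(1,u)$ exactly when two things hold: $\sum_a z_a = 1$, and $\sum_a z_a a = u$.

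For (1)$\Rightarrow$(2), suppose $A$ is balanced relative to $X$, and let $u\in X$ satisfy $\mathrm{ext}(u)=\sum_{a\in A} z_a\,\mathrm{ext}(a)$ with $z_a\in\Z$. Split the coefficients into their positive and negative parts. Let $P=\sum_{z_a>0} z_a$ and $Q=\sum_{z_a<0}|z_a|$, so that $P-Q=1$. Now build a sequence of length $m=P+Q=2Q+1$, which is odd. It lists each $a$ with $z_a>0$ exactly $z_a$ times, placed in the odd positions, and each $a$ with $z_a<0$ exactly $|z_a|$ times, placed in the even positions. The odd positions number $Q+1=P$ and the even positions number $Q$, so the counts match. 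The alternating sum of this sequence is $\sum_a z_a a = u\in X$, so balancedness gives $u\in A$. The one degenerate point to check is that $P\ge 1$ always holds, since $P-Q=1$ and $Q\ge 0$. Hence the sequence is nonempty and the odd positions can always be filled.

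For (2)$\Rightarrow$(1), take an odd $m$ and elements $a^{(1)},\dots,a^{(m)}\in A$ whose alternating sum $u$ lies in $X$. The signs $+1,-1,\dots,+1$ sum to $1$ because $m$ is odd. Therefore
\[
\sum_{i=1}^m (-1)^{i+1}\,\mathrm{ext}(a^{(i)}) = (1,u)=\mathrm{ext}(u),
\]
which shows that $\mathrm{ext}(u)$ lies in the integer span. By (2), $u\in A$.

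No step is a real obstacle; the only care needed is the bookkeeping in (1)$\Rightarrow$(2), namely realizing an arbitrary integer combination with coefficient sum $1$ as an odd-length alternating sequence. Repetitions are allowed in the definition of balancedness, and that is what makes the rearrangement above legitimate.
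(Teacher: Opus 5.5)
Your proposal is correct and follows essentially the same route as the paper: read off $\sum z_a = 1$ from the leading coordinate, expand the integer combination into an odd-length alternating sum with one more positive copy than negative, and for the converse note that the alternating signs sum to $1$, giving $\mathrm{ext}(u)$. Your explicit interleaving into odd and even positions, together with the check that $P\ge 1$, just spells out the paper's ``after reordering'' step.
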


\begin{proof}
Let's first prove the forward direction, i.e., $(1)\Rightarrow (2)$. Assume that $A$ is balanced relative to $X$, and let $u \in X$ satisfy $\mathrm{ext}(u)=\sum_{j=1}^s \gamma_j \,\mathrm{ext}(a^{(j)})$ for some $a^{(1)},\dots,a^{(s)} \in A$ and $\gamma_1,\dots,\gamma_s \in \Z$.
Looking at the first coordinate gives $1=\sum_{j=1}^s \gamma_j$.
Now expand this integer combination as an odd alternating sum. I.e., for each $\gamma_j>0$,
take $\gamma_j$ copies of $a^{(j)}$ with sign $+$, and for each $\gamma_j<0$, take
$-\gamma_j$ copies of $a^{(j)}$ with sign $-$. Since the total sum of coefficients is $1$,
the number of positively signed copies is exactly one more than the number of negatively
signed copies. After reordering, we have
\[
\mathrm{ext}(u)
=
\mathrm{ext}(b^{(1)})-\mathrm{ext}(b^{(2)})+\mathrm{ext}(b^{(3)})-\cdots+\mathrm{ext}(b^{(m)})
\]
for some odd $m$ and some $b^{(1)},\dots,b^{(m)} \in A$. This implies $u=b^{(1)}-b^{(2)}+b^{(3)}-\cdots+b^{(m)}.$ Since $u \in X$ and $A$ is balanced relative to $X$, it follows that $u \in A$.

For the reverse direction, let $u=a^{(1)}-a^{(2)}+a^{(3)}-\cdots+a^{(m)}$ for some odd $m$ and some $a^{(1)},\dots,a^{(m)}\in A$, and assume $u \in X$.
Then
\[
\mathrm{ext}(u)
=
\mathrm{ext}(a^{(1)})-\mathrm{ext}(a^{(2)})+\mathrm{ext}(a^{(3)})-\cdots+\mathrm{ext}(a^{(m)}),
\]
since the first coordinate on the right-hand side is $1$.
Hence $\mathrm{ext}(u)$ belongs to the integer span of $\{\mathrm{ext}(a):a\in A\}$,
and so by assumption $u \in A$. Therefore $A$ is balanced relative to $X$.
\end{proof}

We now prove the analogue, in the set $X$, of the balancedness theorem of \cite{chen2020best} for ordinary Boolean relations.

\begin{lemma}
\label{thm:linrel}
Let $A \subseteq X \subseteq \Bool^N$. Then the following are equivalent.
\begin{enumerate}
    \item $A$ is balanced relative to $X$.
    \item For every $u \in X \setminus A$, there exists a prime-power $q_u := p^a$ (for some prime $p$) and a linear polynomial
    $\ell_u(z_1,\dots,z_N)=\alpha_0+\alpha_1 z_1+\cdots+\alpha_N z_N$
    over $\Z/q_u\Z$ such that $\ell_u\ \captures_A\ u$. %$\ell_u(a)=0$ for all $a \in A,$ and $\ell_u(u)\neq 0$.
\end{enumerate}
Moreover, if $A$ is not balanced relative to $X$, then there exists some
$u \in X \setminus A$ that is not capturable by any linear polynomial over any \ring.
\end{lemma}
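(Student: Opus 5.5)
We prove the two implications separately, using \Cref{lem:bal-latt} to replace balancedness by the integer-span condition. Write $L:=\mathrm{span}_{\Z}\{\ext(a):a\in A\}\subseteq\Z^{N+1}$. A linear polynomial $\ell=\alpha_0+\sum_i\alpha_i z_i$ over a ring $E$ is the same as a vector $\alpha=(\alpha_0,\dots,\alpha_N)\in E^{N+1}$, and we have $\ell(v)=\langle\alpha,\ext(v)\rangle$. Thus $\ell$ vanishes on $A$ if and only if the pairing $\langle\alpha,\cdot\rangle$ vanishes on $L$, where integer vectors act through $\Z\to E$.

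\emph{(2)$\Rightarrow$(1), and the ``moreover'' clause.} Suppose $A$ is not balanced relative to $X$. By \Cref{lem:bal-latt} there is $u\in X\setminus A$ with $\ext(u)=\sum_j\gamma_j\ext(a^{(j)})$ for some integers $\gamma_j$. Let $\ell$ be any linear polynomial over any \ring\ $E$ with $\ell(a)=0$ for all $a\in A$. Since the identity is an integer combination, it maps into $E$, and linearity gives
\[
\ell(u)=\langle\alpha,\ext(u)\rangle=\sum_j\gamma_j\,\ell(a^{(j)})=0 .
\]
So $u$ is capturable by no linear polynomial over any ring. This proves the moreover statement, and in particular it contradicts (2), since prime-power rings are special cases.

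\emph{(1)$\Rightarrow$(2).} Assume $A$ is balanced, fix $u\in X\setminus A$, and let $w:=\ext(u)$. By \Cref{lem:bal-latt}, $w\notin L$. We need a prime power $q$ and an $\alpha\in(\Z/q\Z)^{N+1}$ with $\langle\alpha,L\rangle\equiv 0$ and $\langle\alpha,w\rangle\not\equiv 0\pmod q$.

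We get this from the Smith normal form. Choose a basis $e_1,\dots,e_{N+1}$ of $\Z^{N+1}$ and integers $d_1\mid d_2\mid\cdots$ such that $L=\bigoplus_i d_i\Z e_i$, with $d_i=0$ allowed. The quotient $\Z^{N+1}/L\cong\bigoplus_i\Z/d_i\Z$ is a finitely generated abelian group, and $w$ is nonzero in it, so some coordinate $w_i$ of $w$ (in the $e$-basis) satisfies $w_i\notin d_i\Z$.

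If $d_i\neq0$, factor $d_i$ into prime powers. By the Chinese remainder theorem, $w_i\not\equiv0\pmod{p^a}$ for some exact prime-power divisor $p^a\| d_i$; take $q=p^a$. If $d_i=0$, then $w_i\neq0$; pick any prime power $q=p^a$ with $p^a\nmid w_i$, for instance $p^a>|w_i|$.

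In both cases, take $\alpha$ to be the functional "$i$-th $e$-coordinate mod $q$", expressed in the standard dual basis. It kills $L$, because $q\mid d_i$ (or $d_i=0$), and it does not vanish on $w$. Then $\ell_u:=\langle\alpha,\ext(\cdot)\rangle$ is the required linear polynomial over $\Z/q\Z$ capturing $u$ with respect to $A$.

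The only delicate step is this last one: getting a \emph{prime-power} modulus rather than an arbitrary one. Reducing from $\Z/d_i\Z$ to a single primary component handles it. Everything else is routine linearity.
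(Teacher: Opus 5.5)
Your proof is correct, but for (1)$\Rightarrow$(2) it takes a more self-contained route than the paper.

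The paper argues by contradiction. If no linear capturing polynomial over any prime-power ring existed, then \cite[Lemma~4.6]{chen2020best} would place $\mathrm{ext}(u)$ in the span of the vectors $\mathrm{ext}(a)$ modulo every prime power $q$. Then \cite[Lemma~4.3]{chen2020best} would lift this to membership in the integer span, contradicting \Cref{lem:bal-latt}.

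You instead work directly with the elementary-divisor (stacked basis) decomposition $L=\bigoplus_i d_i\mathbb{Z}e_i$. You locate a coordinate with $w_i\notin d_i\mathbb{Z}$, and pass to a single primary component of $\mathbb{Z}/d_i\mathbb{Z}$ via the Chinese remainder theorem (or to a large prime power when $d_i=0$). You then write down the capturing functional explicitly.

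This essentially inlines what the two cited lemmas provide. The paper's version is shorter but rests on black boxes whose exact statements the reader must look up. Yours is constructive, shows transparently why a prime-power modulus always suffices, and matches the Smith-normal-form computation the paper uses in its implementation.

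The converse direction and the ``moreover'' clause are the same linearity argument as in the paper. You phrase it via an integer combination with coefficient sum $1$ (from \Cref{lem:bal-latt}) rather than an odd alternating sum, and the two are equivalent.
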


\begin{proof}
Let's first prove $(1)\Rightarrow(2)$. Assume that $A$ is balanced relative to $X$, and
fix $u \in X \setminus A$. By \Cref{lem:bal-latt}, we have that $\mathrm{ext}(u)\notin \mathrm{span}_{\Z}\bigl(\{\mathrm{ext}(a):a\in A\}\bigr)$.
Let $M$ be the matrix whose rows are the vectors $\mathrm{ext}(a)$ for $a\in A$,
followed by $\mathrm{ext}(u)$ as the final row.

Suppose for contradiction that no such linear capturing polynomial exists.
Then for every prime-power $q$ and every nonzero
$c \in \Z/q\Z$, the linear system
\[
M \alpha = b_c,
\qquad
b_c := (0,\dots,0,c)^T
\]
has no solution over $\Z/q\Z$.
By \cite[Lemma~4.6]{chen2020best}, this implies that for every prime-power $q$,
the last row $\mathrm{ext}(u)$ lies in the span of the previous rows modulo $q$.
Then by \cite[Lemma~4.3]{chen2020best}, it follows that $\mathrm{ext}(u)\in \mathrm{span}_{\Z}\bigl(\{\mathrm{ext}(a):a\in A\}\bigr),$
contradicting \Cref{lem:bal-latt}.
Therefore there exists a prime-power $q$ and coefficients $\alpha_0,\alpha_1,\dots,\alpha_N \in \Z/q\Z$
such that
\[
\alpha_0+\alpha_1 a_1+\cdots+\alpha_N a_N = 0
\qquad\text{for all } a\in A,
\]
while
\[
\alpha_0+\alpha_1 u_1+\cdots+\alpha_N u_N \neq 0
\qquad\text{in } \Z/q\Z.
\]
Thus
\[
\ell_u(z_1,\dots,z_N):=\alpha_0+\alpha_1 z_1+\cdots+\alpha_N z_N
\]
captures $u$ with respect to $A$.

Next let's prove the contrapositive of $(2)\Rightarrow(1)$, in the stronger form stated at the end of the theorem. 
Assume that $A$ is not balanced relative to $X$. Then there
exist an odd integer $m \ge 1$ and tuples $a^{(1)},\dots,a^{(m)} \in A$ such that
\[
u:=a^{(1)}-a^{(2)}+a^{(3)}-\cdots+a^{(m)} \in X \setminus A.
\]
We claim that no linear polynomial over any \ring captures $u$ with respect to $A$.

Suppose for contradiction that some \ringg{E} and some linear polynomial
\[
\ell(z_1,\dots,z_N)=\beta_0+\beta_1 z_1+\cdots+\beta_N z_N
\]
over $E$ capture $u$. By our ring convention, Boolean coordinates and the signs $\pm 1$ are
interpreted in $E$ via the map $\mathbb Z\to E$, $k\mapsto k\cdot 1_E$.

Then $\ell(a)=0$ for all $a\in A$, and $\ell(u)\neq 0$.
But our assumption gives
\[
\ell(u)
=
\beta_0+\sum_{i=1}^N \beta_i
\bigl(a^{(1)}_i-a^{(2)}_i+a^{(3)}_i-\cdots+a^{(m)}_i\bigr).
\]
Since $m$ is odd, the constant term also alternates correctly, and hence
\[
\ell(u)=\ell(a^{(1)})-\ell(a^{(2)})+\ell(a^{(3)})-\cdots+\ell(a^{(m)}).
\]
Since each $a^{(j)}$ lies in $A$, the right-hand side is $0$, we get  a contradiction.
Hence $u$ is not capturable over any \ring.
\end{proof}

We now obtain the desired equivalence between \tbalancedness{} and degree-$\le t$ capturing.

\begin{theorem}
\label{thm:t-balanced-main}
Let $R\subseteq\{0,1\}^r$ and let $t\ge 1$. Then the following are equivalent.
\begin{enumerate}
    \item $R$ is \tbalanced.
    \item For every $f\in\{0,1\}^r\setminus R$, there exists a prime-power $q_f$ and a multilinear polynomial $p_f$ over $\mathbb Z/q_f\mathbb Z$ of degree at most $t$ such that $p_f$ captures $f$ with respect to $R$.
\end{enumerate}
Moreover, if $R$ is not \tbalanced, then there exists $f\in\{0,1\}^r\setminus R$ that is not captured by any degree-$\le t$ multilinear polynomial over any \ring.
\end{theorem}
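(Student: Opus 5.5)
The plan is to reduce everything to \Cref{thm:linrel} applied with $N:=|\mathcal S_{r,t}|$, $X:=X_t$ and $A:=R^{(t)}$. By \Cref{def:tbal}, $R$ being \tbalanced{} means exactly that $R^{(t)}$ is balanced relative to $X_t$. The bridge between the two settings is a dictionary between degree-$\le t$ multilinear polynomials on $\{0,1\}^r$ and linear polynomials on the lifted space.

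\textbf{The dictionary.} For any ring $E$, a multilinear polynomial $p(x)=\sum_{|S|\le t}\alpha_S\prod_{i\in S}x_i$ corresponds to the linear polynomial
\[
\ell(z)=\alpha_\emptyset+\sum_{S\in\mathcal S_{r,t}}\alpha_S z_S .
\]
These satisfy $p(x)=\ell(\nu_t(x))$ for all $x\in\{0,1\}^r$, and the correspondence is a bijection in both directions. I also need that $\nu_t$ is injective on $\{0,1\}^r$, since the singleton coordinates $z_{\{i\}}=x_i$ recover $x$. Injectivity gives two consequences: $f\notin R$ if and only if $\nu_t(f)\in X_t\setminus R^{(t)}$, and every $u\in X_t\setminus R^{(t)}$ equals $\nu_t(f)$ for a unique $f\notin R$. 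Hence $p$ captures $f$ with respect to $R$ if and only if $\ell$ captures $\nu_t(f)$ with respect to $R^{(t)}$.

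\textbf{The two directions.} For $(1)\Rightarrow(2)$, take $f\notin R$. \Cref{thm:linrel} yields a prime power $q$ and a linear $\ell$ over $\mathbb Z/q\mathbb Z$ capturing $u=\nu_t(f)$. Pulling $\ell$ back through the dictionary gives a multilinear $p_f$ of degree at most $t$ that captures $f$. For $(2)\Rightarrow(1)$, each $u\in X_t\setminus R^{(t)}$ has the form $\nu_t(f)$. Pushing $p_f$ forward gives a linear capturing polynomial over $\mathbb Z/q_f\mathbb Z$, so condition (2) of \Cref{thm:linrel} holds and therefore $R^{(t)}$ is balanced relative to $X_t$.

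\textbf{The moreover clause.} If $R$ is not \tbalanced, the last part of \Cref{thm:linrel} gives some $u\in X_t\setminus R^{(t)}$ that no linear polynomial over any ring captures. Write $u=\nu_t(f)$. If some degree-$\le t$ polynomial over a ring $E$ captured $f$, its linearization would capture $u$ over $E$, a contradiction.

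\textbf{Expected obstacle.} The argument is mostly bookkeeping, so the main point needing care is the polynomial--linear correspondence over an arbitrary ring. Two things must be checked there. First, Boolean values are interpreted via $\mathbb Z\to E$, so that $\prod_{i\in S}x_i$ evaluated in $E$ equals the image of the integer $\nu_t(x)_S\in\{0,1\}$. Second, the constant term $\alpha_\emptyset$ plays the role of $\alpha_0$ in \Cref{thm:linrel}.
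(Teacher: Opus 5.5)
Your proposal is correct and follows essentially the paper's argument: both lift via $\nu_t$ and apply \Cref{thm:linrel} to $R^{(t)}\subseteq X_t$. Both translate capturing polynomials through the correspondence $p(x)=\ell(\nu_t(x))$, using injectivity of $\nu_t$. The only cosmetic difference is that the paper proves $(2)\Rightarrow(1)$ and the ``moreover'' clause together, by contradiction via the last part of \Cref{thm:linrel}, whereas you invoke \Cref{thm:linrel}'s $(2)\Rightarrow(1)$ directly and then treat the ``moreover'' clause separately.
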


\begin{proof}
By definition, $R$ is \tbalanced{} if and only if $R^{(t)}$ is balanced relative to $X_t$.

We first prove $(1)\Rightarrow(2)$. Assume that $R$ is \tbalanced{} and fix $f \in \Bool^r \setminus R$. Since $t \ge 1$, we have that the lift $\nu := \nu_t(f)$ lies in $X_t \setminus R^{(t)}$ (because it clearly is in $X_t$, and it cannot be in $R^{(t)}$ since $f\notin R$).
%all singleton monomials occur among the coordinates of $\nu_t$, and therefore $\nu_t$ is injective. In particular,$\nu := \nu_t(f)$ lies in $X_t \setminus R^{(t)}$.

Applying \Cref{thm:linrel} to the pair
$R^{(t)} \subseteq X_t$, there exist a prime-power $q_f$ and a linear polynomial $\ell_\nu(z)=\alpha_0+\sum_{S \in \mathcal{S}_{r,t}} \alpha_S z_S$ over $\Z/q_f\Z$ such that $\ell_\nu\ \captures\ \nu$ with respect to $R^{(t)}$, i.e., $\ell_\nu(v)=0,$ for all $v \in R^{(t)},$ and $\ell_\nu(\nu)\neq 0$.
Define
\[
p_f(x_1,\dots,x_r)
:=
\ell_\nu(\nu_t(x_1,\dots,x_r))
=
\alpha_0+\sum_{S \in \mathcal{S}_{r,t}} \alpha_S \prod_{i \in S} x_i.
\]
This is a multilinear polynomial of degree at most $t$ over $\Z/q_f\Z$.
If $y \in R$, then $\nu_t(y) \in R^{(t)}$, so
\[
p_f(y)=\ell_\nu(\nu_t(y))=0.
\]
On the other hand,
\[
p_f(f)=\ell_\nu(\nu_t(f))=\ell_\nu(\nu)\neq 0.
\]
Therefore $p_f$ $\captures_R$ $f$.

Now we prove $(2)\Rightarrow(1)$, in the stronger form stated at the end of the theorem. Assume that for every $f \in \Bool^r \setminus R$,
there exists a degree-$\le t$ multilinear capturing polynomial over some ring. We show that $R^{(t)}$ is balanced relative to $X_t$. Suppose not. Then, by \Cref{thm:linrel}, there exists $\nu \in X_t \setminus R^{(t)}$ that is not capturable by any linear polynomial over any \ring. Since the $t$-lift $\nu_t$ is injective, there is a unique $f \in \Bool^r \setminus R$ such that $\nu=\nu_t(f)$. By assumption, there exists a \ringg{E} and a degree-$\le t$ multilinear
polynomial
\[
p_f(x_1,\ldots,x_r)=\beta_0+\sum_{S\in S_{r,t}}\beta_S\prod_{i\in S}x_i
\]
over $E$ that captures $f$ with respect to $R$.
Define a linear polynomial on the lifted coordinates by
\[
\ell_\nu(z):=\beta_0+\sum_{S\in\mathcal{S}_{r,t}} \beta_S z_S.
\]
If $\nu'=\nu_t(y)\in R^{(t)}$, then $y \in R$, and therefore
\[
\ell_\nu(\nu')
=
\beta_0+\sum_{S\in\mathcal{S}_{r,t}} \beta_S \nu_t(y)_S
=
\beta_0+\sum_{S\in\mathcal{S}_{r,t}} \beta_S \prod_{i\in S} y_i
=
p_f(y)
=
0.
\]
But
\[
\ell_\nu(\nu)=\ell_\nu(\nu_t(f))=p_f(f)\neq 0.
\]
So $\ell_\nu$ captures $\nu$ with respect to $R^{(t)}$, contradicting the choice of $\nu$.
Hence $R^{(t)}$ is balanced relative to $X_t$, that is, $R$ is \tbalanced.
\end{proof}

\subsection{Main \texorpdfstring{$\NRD$}{NRD} upper bound proof}
\label{subsec:main-nrd-upper-bound}

We now combine \Cref{thm:t-balanced-main} with the general polynomial
sparsification theorem of \cite{chen2020best} that gives a polynomial-time algorithm for finding a smaller \emph{equivalent} instance (i.e., subinstance with the same set of satisfying assignments) given a polynomial representation.

\begin{theorem}[Theorem 3.5, \cite{chen2020best}]\label{thm:cjp-general-capture}
Let $R \subseteq \Bool^r$ be a fixed Boolean relation. Suppose that for every $u \in \Bool^r \setminus R$, there exists a ring $E_u$ which is either $\mathbb{Q}$ or a $\Z/q_u\Z$ for some $q_u\in \mathbb{N}_{>1}$ and a polynomial $p_u$ over $E_u$ of degree at most $t$ such that $p_u$ captures $u$ with respect to $R$. Then there exists a polynomial-time algorithm that, given a set of $R$-constraints on $n$ variables, outputs an equivalent subset of $O(n^t)$ constraints.
\end{theorem}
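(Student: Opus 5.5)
The plan is the standard linear-algebra sparsification argument: encode every constraint, for every falsifying tuple, as a coefficient vector in a space of dimension $O(n^t)$, and keep only a spanning set of constraints.

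\textbf{Encoding.} Fix an instance $I$ of $R$-constraints on variables $x_1,\dots,x_n$. For a constraint $C=\{R,(x_{i_1},\dots,x_{i_r})\}$ and a falsifying tuple $u\in\Bool^r\setminus R$, define the substituted polynomial $p_u^C(x_1,\dots,x_n):=p_u(x_{i_1},\dots,x_{i_r})$. After multilinearization (we only evaluate on Boolean points) it has degree at most $t$. It is therefore a vector $v_u^C\in E_u^{\mathcal M}$, where $\mathcal M$ is the set of monomials of degree $\le t$ in $n$ variables, so $|\mathcal M|=\sum_{j\le t}\binom nj=O(n^t)$. Repeated variables are harmless, since multilinearization absorbs them.

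\textbf{Selection.} For each of the at most $2^r$ tuples $u$, I would choose a subset $K_u\subseteq I$ such that every $v_u^C$ with $C\in I$ lies in the $E_u$-span of $\{v_u^{C'}:C'\in K_u\}$, and output $K:=\bigcup_u K_u$.
\begin{itemize}
\item When $E_u=\mathbb Q$, Gaussian elimination gives a basis, so $|K_u|\le|\mathcal M|$.
\item When $E_u=\Z/q\Z$, I would first reduce to prime powers. By the Chinese remainder theorem, if $p_u(u)\ne0$ then its image is nonzero modulo some prime-power factor $p^a\,\|\,q$. Moreover the image still vanishes on $R$, so we may assume $q=p^a$.
\item Then I select greedily: add $C$ to $K_u$ whenever $v_u^C$ is not yet in the submodule generated by the current selection. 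Each addition strictly enlarges a submodule of $(\Z/p^a\Z)^{\mathcal M}$. Any strictly increasing chain there has length at most $a|\mathcal M|$, because the module has cardinality $p^{a|\mathcal M|}$ and each strict step at least multiplies the size by $p$.
\item Hence $|K_u|\le a|\mathcal M|=O(n^t)$, as $a$ depends only on the fixed $R$. Since the number of tuples $u$ is a constant, $|K|=O(n^t)$.
\end{itemize}
Membership in a submodule of $(\Z/p^a\Z)^d$ can be decided in polynomial time, for example via Howell or Smith normal form. This gives the polynomial-time algorithm.

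\textbf{Correctness.} $K\subseteq I$, so $\Sat(I)\subseteq\Sat(K)$. For the converse, suppose $\sigma$ satisfies every constraint in $K$ but violates some $C\in I$. Then the tuple $\sigma(x_{i_1},\dots,x_{i_r})$ equals some $u\notin R$, so $p_u^C(\sigma)=p_u(u)\ne0$. On the other hand, $v_u^C=\sum_{C'\in K_u}\lambda_{C'}v_u^{C'}$, and evaluation at the Boolean point $\sigma$ is linear in the coefficient vector. This gives
\[p_u^C(\sigma)=\sum_{C'\in K_u}\lambda_{C'}\,p_u^{C'}(\sigma)=0,\]
because each $C'$ is satisfied by $\sigma$, so its tuple lies in $R$, where $p_u$ vanishes. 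This contradiction shows $K$ is equivalent to $I$.

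\textbf{Main obstacle.} The only delicate step is the ring case: $\Z/q\Z$ is not a field, so ``bases'' are unavailable. The bound must come from the chain-length argument for submodules over $\Z/p^a\Z$, after the CRT reduction to a prime power. I would handle this first and treat everything else as routine.
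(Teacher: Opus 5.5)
Your argument is correct and is essentially the standard proof of Theorem 3.5 in \cite{chen2020best}, which the paper cites without reproducing: for each falsifying tuple, substitute the capturing polynomial into every constraint, keep a subset of constraints whose coefficient vectors span all the others (a basis over $\mathbb{Q}$, a submodule-chain bound over $\mathbb{Z}/q\mathbb{Z}$), and take the union. The CRT reduction to prime powers is harmless but unnecessary, since every strict enlargement of a subgroup of $(\mathbb{Z}/q\mathbb{Z})^{\mathcal M}$ at least doubles its size, which gives $|K_u|\le |\mathcal M|\log_2 q$ directly.
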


% \begin{proof}
% This is the upper-bound theorem of \cite[Theorem~3.5]{chen2020best}
% specialized to a single Boolean relation.
% \end{proof}

\begin{corollary}\label{cor:t-balanced-sparsification}
Let $R \subseteq \Bool^r$ be \tbalanced, with $t \ge 1$. Then every instance
of $\cspr$ on $n$ variables has an equivalent subinstance with $O(n^t)$
constraints.
\end{corollary}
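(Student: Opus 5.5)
The plan is to chain \Cref{thm:t-balanced-main} with \Cref{thm:cjp-general-capture}. Since $R$ is \tbalanced{}, the direction $(1)\Rightarrow(2)$ of \Cref{thm:t-balanced-main} gives, for each falsifying tuple $f\in\Bool^r\setminus R$, a prime power $q_f$ and a multilinear polynomial $p_f$ over $\Z/q_f\Z$ of degree at most $t$ with $p_f$ capturing $f$ with respect to $R$. Every $\Z/q_f\Z$ is of the form $\Z/q\Z$ with $q\in\mathbb N_{>1}$, so these rings are admissible in \Cref{thm:cjp-general-capture}. That theorem then yields, for any set of $R$-constraints on $n$ variables, an equivalent subset of $O(n^t)$ constraints, which is exactly an equivalent subinstance.

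The only step needing care is the instance model. \Cref{thm:cjp-general-capture} is stated for sets of $R$-constraints applied to variables. Our default $\cspr$ is the literal model, which also allows negations and constants. I would handle this in one of two ways.

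\emph{Option (a): capture the derived relations directly.} A constraint $\{R,(\ell_1,\dots,\ell_r)\}$ with literals and constants is a constraint over a derived relation $R'$, obtained from $R$ by flipping some coordinates and pinning or identifying others. The capturing polynomials transfer. Substitute $1-x_i$ for negated coordinates and constants for pinned ones, and identify variables for repetitions. After multilinearizing, the degree does not increase, and the polynomial still vanishes on $R'$ and is nonzero at the image of the falsifying tuple. So every relation in the finite language of derived relations satisfies the hypothesis with degree $\le t$. We then sparsify each of these finitely many relations separately and take the union, which is still $O(n^t)$.

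\emph{Option (b): reduce to the simple positive model.} Alternatively, invoke the model-robustness result \Cref{thm:positive-nrd}, which changes $n$ only by a constant factor.

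I expect this bookkeeping to be the only obstacle. Option (a) is the cleaner route: it only needs the observation that affine substitutions and identifications preserve both the capturing property and the degree bound, together with the fact that a union of equivalent subinstances for the separate relations is equivalent to the whole instance.
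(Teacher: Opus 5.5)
Your proposal is correct and follows the paper's proof, which is exactly the chain of the $(1)\Rightarrow(2)$ direction of \Cref{thm:t-balanced-main} followed by \Cref{thm:cjp-general-capture}. The paper does not discuss the literal-versus-positive model at all, so your Option (a) is valid extra bookkeeping. If you use Option (b) instead, note that \Cref{thm:positive-nrd} bounds non-redundancy rather than sparsifier size, so you also need the observation that deleting redundant constraints one at a time leaves an equivalent non-redundant subinstance.
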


\begin{proof}
By \Cref{thm:t-balanced-main}, for every tuple
$f \in \Bool^r \setminus R$, there exists a prime-power $q_f$ and a multilinear
polynomial $p_f(x_1,\dots,x_r)$ over $\Z/q_f\Z$
of degree at most $t$ that $\captures_R$ $f$.
Hence, by \Cref{thm:cjp-general-capture}, every $n$-variable instance of $\cspr$ has an equivalent subinstance of
$O(n^t)$ constraints.
\end{proof}

\begin{theorem}\label{thm:balanced-nrd}
Let $R \subseteq \Bool^r$ be \tbalanced, with $t \ge 1$. Then $\nrdr=O(n^t).$
\end{theorem}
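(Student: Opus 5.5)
The plan is to derive the bound directly from \Cref{cor:t-balanced-sparsification}, using only the fact that a non-redundant instance has no proper equivalent subinstance. Fix $n$ and let $I$ be any non-redundant instance of $\cspr$ on $n$ variables. By \Cref{cor:t-balanced-sparsification}, there is a subinstance $I'\subseteq I$ with $\Sat(I')=\Sat(I)$ and $|I'|\le c\,n^t$. Here $c$ depends only on $R$ and $t$, not on $I$ or $n$.

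The key step is to show that $I'=I$. Suppose some constraint $C\in I\setminus I'$ existed. Then $I'\subseteq I\setminus\{C\}\subseteq I$. Adding constraints can only shrink the set of satisfying assignments, so
\[
\Sat(I)\subseteq \Sat(I\setminus\{C\})\subseteq \Sat(I')=\Sat(I).
\]
All three sets are therefore equal, which means $C$ is redundant. This contradicts the non-redundancy of $I$. Hence $I=I'$, so $|I|\le c\,n^t$. Taking the maximum over all non-redundant $I$ gives $\nrdr=O(n^t)$.

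The main point needing care is the instance model. \Cref{thm:cjp-general-capture} is stated for a set of $R$-constraints applied to variables, whereas $\nrdr$ refers by default to the literal model. I would handle this in one of two ways. The first option is to apply the corollary in the positive model and then transfer the bound to the literal model using \Cref{thm:positive-nrd}, since that result preserves asymptotic non-redundancy up to a constant-factor change in $n$, and $O((cn)^t)=O(n^t)$. The second option is to note that replacing a literal or constant by a variable composed with a unary map keeps capturing polynomials of degree at most $t$ (negation $x\mapsto 1-x$ and constants are affine), so the Chen--Jansen--Pieterse sparsification applies directly. Either way the exponent $t$ is unchanged, and the remaining argument is the routine monotonicity step above.
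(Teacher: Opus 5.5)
Your proposal is correct and follows essentially the paper's proof: apply \Cref{cor:t-balanced-sparsification} and note that a non-redundant instance cannot have a proper equivalent subinstance, so the sparsifier must be the whole instance. Your extra remark on transferring the bound from the positive to the literal model via \Cref{thm:positive-nrd} is sound and makes explicit a point the paper leaves implicit.
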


\begin{proof}
Consider any non-redundant instance $\mathcal{I}$ of $\cspr$ on $n$ variables. By \Cref{cor:t-balanced-sparsification}, $\mathcal{I}$ has an equivalent
subinstance $\mathcal{I}'$ with $O(n^t)$ constraints. Since $\mathcal{I}$ is non-redundant, no proper subset of its constraints is
equivalent to $\mathcal{I}$. Hence $\mathcal{I}'=\mathcal{I}$, and therefore
$\mathcal{I}$ itself has $O(n^t)$ constraints. This proves the claim.
\end{proof}

Consequently, if $u(R):=\min\{t\ge 1 : R \text{ is } \tbalanced\}$, then $\nr{R}{n}=O(n^{u(R)})$.

\section{Lower Bound}\label{sec:lower-bound}

% In this section we recall the existing lower-bound framework of Carbonnel~\cite{carbonnel2022redundancy}, which already gives the lower bounds we need. We do not introduce any new lower-bound technique here. Instead, we restate the relevant notions and techniques. This provides the algebraic \critr that we use in our experiments and later classification results.

In this section we recall the lower-bound framework of Carbonnel~\cite{carbonnel2022redundancy}. The role of this section is to justify the lower-bound test used in \Cref{sec:experiments}. The framework has two ingredients. First, fgpp-definability formalizes when constraints of one relation can be replaced by constant-size gadgets over another relation while preserving non-redundancy lower bounds. Second, pattern partial polymorphisms give an algebraic characterization of fgpp-definability. For our purposes, the only pattern needed is Carbonnel's universal $k$-cube pattern: if this pattern fails to preserve $R$, then $R$ inherits the $\Omega(n^k)$ non-redundancy lower bound of $\OR_k$.

% \amati{i might have written it in a confusing and a hand-wavy way, check} At a high level, Carbonnel's framework gives an algebraic way to certify a lower bound on the
% non-redundancy. The first key notion they define is \emph{fgpp-definitions} which
% describes when one relation can be expressed from another in a way that preserves
% non-redundancy lower bounds. Secondly, they define \emph{pattern partial polymorphisms} which
% give an algebraic obstruction to such definability. The key point for us is that a
% particular pattern, the universal $k$-cube, witnesses the ability of any relation to define the
% $k$-ary \OR relation. Since $\OR_k$ has non-redundancy $\Omega(n^k)$, failure of this
% pattern immediately yields an $\Omega(n^k)$ lower bound for $R$.

% We now recall these relevant notions now.

Before defining fgpp-definability formally, let us explain their role. In ordinary pp-definability (famously used by Schaefer \cite{Schaefer78} for \textsf{P} vs \textsf{NP} classification of problems), one expresses a target relation by a conjunction of constraints over a base language. Fgpp-definability extends this by allowing certain unary functional guards required for non-redundancy lower-bounds.

\begin{definition}[fgpp-definition]\label{def:fgppdef}
Let $\Gamma$ be a constraint language over a finite domain $D$, and let $R \subseteq D^m$
be a relation. We say that $R$ is \emph{fgpp-definable} from $\Gamma$ if there is a formula
\[
R(x_1,\dots,x_m)\equiv \exists y_1,\dots,y_q\; \psi(x_1,\dots,x_m,y_1,\dots,y_q),
\]
where $\psi$ is a conjunction of constraints over $\Gamma$ together with \emph{functional guards}
of the form $Q_g(u,v)$, where $g\colon D \to D$ is a unary map and
\[
Q_g:=\{(a,g(a)):a\in D\}.
\]
We write $\langle \Gamma \rangle_{fg}$ for the set of all relations fgpp-definable from $\Gamma$.
\end{definition}

Next, we define the pattern partial polymorphisms. The intuition is that a
\emph{pattern} specifies a partial operation by listing a collection of allowed input-output
templates. Such operations are more structured than arbitrary partial polymorphisms, but still
rich enough to capture the definability questions relevant here.

\begin{definition}[Polymorphism pattern and pattern partial operation]
A \emph{polymorphism pattern} of arity $s$ is a set of pairs
\[
((x_1,\dots,x_s),x),
\]
where $(x_1,\dots,x_s)$ is a tuple of variables and $x$ is one of the variables appearing in it.
Given a finite domain $D$, the \emph{interpretation} of such a pattern $P$ on $D$ is the
partial operation $f_P^D$ defined as follows:
its domain consists of all tuples of the form
\[
(\varphi(x_1),\dots,\varphi(x_s))
\]
arising from a pair $((x_1,\dots,x_s),x)\in P$ and an assignment $\varphi$ of the variables to $D$,
and on such a tuple we set
\[
f_P^D(\varphi(x_1),\dots,\varphi(x_s)):=\varphi(x).
\]
A partial operation obtained in this way is called a \emph{pattern partial operation}.
\end{definition}
We only consider patterns for which this rule is well-defined.

Next we recall what it means for such an operation to preserve a relation. This is the usual
coordinatewise preservation notion for partial polymorphisms, restricted to the operations that
come from patterns.

\begin{definition}[Pattern partial polymorphisms]
Let $R \subseteq D^r$ be a relation and let $f$ be a $k$-ary partial operation on $D$.
We say that $f$ is a \emph{partial polymorphism} of $R$ if for every choice of tuples
$t^{(1)},\dots,t^{(k)} \in R$ such that $f$ is defined coordinatewise on these tuples,
the tuple obtained by applying $f$ coordinatewise also belongs to $R$.
If in addition $f$ is a pattern partial operation, then we call it a
\emph{pattern partial polymorphism} of $R$.

For a constraint language $\Gamma$, we write $\pppol(\Gamma)$ for the set of all pattern partial polymorphisms of every relation in $\Gamma$.
\end{definition}

The importance of $\pppol(\Gamma)$ is that it gives an exact characterization of fgpp-definability: definability on one side corresponds to inclusion of pattern partial
polymorphisms on the other.

\begin{theorem}[Carbonnel~{\cite[Proposition~7]{carbonnel2022redundancy}}]
\label{thm:carbonnel-duality}
Let $\Gamma_1,\Gamma_2$ be finite constraint languages over the same finite domain. Then
\[
\pppol(\Gamma_1)\subseteq \pppol(\Gamma_2)
\qquad\Longleftrightarrow\qquad
\Gamma_2 \subseteq \langle \Gamma_1\rangle_{fg}.
\]
In particular, for a relation $R$ and any relation $S$ over the same domain,
\[
S \in \langle \{R\}\rangle_{fg}
\qquad\Longleftrightarrow\qquad
\pppol(\{R\}) \subseteq \pppol(\{S\}).
\]
\end{theorem}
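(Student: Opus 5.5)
The statement is a Galois-type duality between pattern partial polymorphisms and fgpp-definitions. I will prove the two directions separately and derive the singleton case as the instance $\Gamma_1=\{R\}$, $\Gamma_2=\{S\}$.

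\emph{The easy direction} is $\Gamma_2\subseteq\langle\Gamma_1\rangle_{fg}\Rightarrow\pppol(\Gamma_1)\subseteq\pppol(\Gamma_2)$. It suffices to show that every pattern partial polymorphism $f=f_P^D$ of $\Gamma_1$ preserves every relation fgpp-definable from $\Gamma_1$. Take tuples $t^{(1)},\dots,t^{(s)}$ in $S$, where
\[
S(\bar x)\equiv\exists\bar y\,\psi(\bar x,\bar y),
\]
with $f$ defined on every column. Extend each tuple by witnesses for $\bar y$.

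The key point is that $f$ need not be defined on the new witness columns. Here the pattern structure is used. Each defined column has the form $(\varphi(x_1),\dots,\varphi(x_s))$ for a single pair $((x_1,\dots,x_s),x)\in P$. Every column value is therefore determined by a map from the variables of a fixed pattern tuple to $D$.

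Following Carbonnel, I would choose the witnesses uniformly. Observe that the column of a functional guard output $g(u)$ is $g$ applied to the column of $u$, which is again an instance of the same pattern, with output $g(\varphi(x))$. Guards are therefore automatically preserved. For auxiliary variables, I would pick witnesses so that every column is of pattern form, for instance by taking witnesses as functions of the pattern variables. If necessary, I would reduce to the case where the $t^{(i)}$ are indexed so that all columns share one pattern tuple, by splitting according to which pair of $P$ is used. Once every column is defined, each conjunct of $\psi$ is preserved by hypothesis, and $f$ applied to $\bar x$ lies in $S$.

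\emph{The hard direction} is $\pppol(\Gamma_1)\subseteq\pppol(\Gamma_2)\Rightarrow\Gamma_2\subseteq\langle\Gamma_1\rangle_{fg}$. For $S\in\Gamma_2$ of arity $m$ with tuples $s^{(1)},\dots,s^{(k)}$, I would build the canonical ``indicator'' formula. Introduce one variable for each function $D^{\{1..k\}}\ni c$, that is, for each potential column. Impose every $\Gamma_1$-constraint that holds coordinatewise on the corresponding columns, and add guards $Q_g(c,g\circ c)$ for all unary $g$. Then identify $x_j$ with the column $(s^{(1)}_j,\dots,s^{(k)}_j)$.

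The projection onto $\bar x$ contains $S$, because each row is a solution. For the reverse inclusion, take any solution $\varphi$ and read it as a partial operation on columns. The guards force $\varphi$ to commute with unary maps, which is exactly what makes it the interpretation of a pattern: the pattern has a pair (column-variable tuple, $\varphi$-output) for each orbit under unary maps. Since it satisfies all $\Gamma_1$-constraints, this operation is a pattern partial polymorphism of $\Gamma_1$, hence of $S$. Applying it to $s^{(1)},\dots,s^{(k)}$ yields the projected tuple in $S$.

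The main obstacle is this step: verifying that guard-compatibility precisely characterizes pattern operations and that the resulting rule is well-defined. For this I expect to restrict the column domain to the $D$-closure of the $x_j$-columns under unary maps, rather than all of $D^k$.
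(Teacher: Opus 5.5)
\textbf{Verdict.} Your hard direction is the standard canonical-instance argument and it works. The easy direction, however, has a genuine gap in your treatment of ``auxiliary variables.''

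\textbf{Where the easy direction fails.} You propose to choose the witnesses for auxiliary variables so that the new columns land in $\mathrm{dom}(f)$, or to split the rows according to which pair of $P$ is used. Neither option is available. The rows $t^{(1)},\dots,t^{(s)}$ are given, their columns generally come from different pairs of $P$, and the witness values are forced row by row by $\psi$.

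In fact no argument can work if quantified variables are unrestricted, because the implication is then false. Let $R\subseteq\{0,1\}^3$ be positive 1-in-3, i.e.\ $W(R)=\{1\}$. Then
$x\vee y\equiv\exists x',y',t\,[Q_\neg(x,x')\wedge Q_\neg(y,y')\wedge R(x',y',t)]$.
This is a conjunction of $R$-constraints and functional guards, so it has the form of the displayed definition in the paper.

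Yet the universal $2$-cube operation $f_2$ preserves $R$. Let $\mu_{000},\mu_{111},\mu_{110},\mu_{001},\mu_{101},\mu_{010}$ be the multiplicities of the column types. The rows $11,10,01$ then have weights $\mu_{111}+\mu_{110}+\mu_{101}$, $\mu_{111}+\mu_{110}+\mu_{010}$ and $\mu_{111}+\mu_{001}+\mu_{101}$. If all three equal $1$, then $\mu_{110}=\mu_{001}$ and $\mu_{101}=\mu_{010}$. Hence the output weight $\mu_{111}+\mu_{001}+\mu_{010}$ is also $1$.

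On the other hand, $f_2$ sends the rows $11,10,01$ of $\OR_2$ to $00$. So $\pppol(\{R\})\not\subseteq\pppol(\{\OR_2\})$. Concretely, these three rows force the witness $t=x\wedge y$, whose column is $100\notin\mathrm{dom}(f_2)$, and there is no other witness to choose.

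\textbf{The missing idea.} What separates fgpp-definitions from pp-definitions over $\Gamma\cup\{Q_g\}$ is a restriction on quantified variables. Every quantified variable $y$ must be functionally determined by a single free variable, $y=g(x_j)$ for some unary $g$. The definition displayed in the paper leaves this implicit, but without it the statement is false, as shown above.

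\textbf{How the easy direction closes.} With this restriction there is nothing to choose. The witness column of $y$ is $g(\mathrm{col}_j)$, where $\mathrm{col}_j$ is the $j$th column of the given rows. It lies in $\mathrm{dom}(f)$ because pattern domains are closed under coordinatewise unary maps, and $f(g(\mathrm{col}_j))=g(f(\mathrm{col}_j))$. Your guard observation alone therefore finishes the easy direction, and the separate ``auxiliary variables'' case disappears.

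\textbf{Why $C$ is needed in the hard direction.} The same restriction is the real reason for your set $C=\{g\circ\mathrm{col}_j: j\in[m],\ g\colon D\to D\}$, where $\mathrm{col}_j=(s^{(1)}_j,\dots,s^{(k)}_j)$. It is not about well-definedness. Over all of $D^k$, a solution satisfying all guards would still be a (total) pattern operation. The problem is that columns outside $C$ would be unguarded, so you would only obtain a pp-definition. With $C$, every variable $v_{g\circ\mathrm{col}_j}$ is guarded by $Q_g(x_j,\cdot)$.

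The step you flagged is then routine:
\begin{itemize}
\item $C$ is closed under unary maps.
\item For any $g$ fixing the entries of $c$, the guard $Q_g(v_c,v_c)$ forces $\varphi(c)$ to be one of those entries.
\item Consequently $\{((z_{c_1},\dots,z_{c_k}),z_{\varphi(c)}):c\in C\}$ is a well-defined pattern whose interpretation is exactly $\varphi|_C$.
\end{itemize}
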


To turn this into concrete lower bounds, Carbonnel defines one specific pattern, the
universal $k$-cube pattern. Its role is simple: failure of this pattern is exactly the
\critr that $R$ can encode the $k$-ary \OR relation, which is the canonical source of
an $\Omega(n^k)$ lower bound.

\begin{definition}[Universal $k$-cube pattern~{\cite[Definition~13]{carbonnel2022redundancy}}]
\label{def:universal-k-cube}
Fix $k \ge 2$. Let $c_1,\dots,c_{2^k-1}$ be the lexicographic ordering of the tuples in $\{x,y\}^k \setminus \{(y,\dots,y)\}$
with respect to $y>x$. The \emph{universal $k$-cube pattern} $P_k^u$ of arity $2^k-1$ is the set of pairs $(t_i,y)$ for all $ 1\le i\le k$, where
\[
t_i := \bigl(c_1[i],c_2[i],\dots,c_{2^k-1}[i]\bigr).
\]
Its interpretation on the Boolean domain is denoted by $f^{\Bool}_{P_k^u}$.
\end{definition}

Under the identification $x\mapsto 1$ and $y\mapsto 0$, the Boolean interpretation
$f^{\{0,1\}}_{P^u_k}$ is exactly the universal $k$-cube operation $f_k$ used in
\Cref{sec:experiments}: its domain consists of $0^m,1^m,c_i,1-c_i$ for
$i\in[k]$, where $m=2^k-1$, and its values are
$f_k(0^m)=0$, $f_k(1^m)=1$, $f_k(c_i)=0$, and $f_k(1-c_i)=1$.

For example, when $k=3$, the left-hand sides of the pairs in $P_3^u$ are exactly the
three columns of the $\OR_3$ relation, and the right-hand side is the missing
tuple $(0,0,0)$, after the identification $x\mapsto 1$ and $y\mapsto 0$~\cite[Example~14]{carbonnel2022redundancy}.

We can now state the lower-bound theorem we need. It says that if the universal $k$-cube
pattern fails on $R$, then $R$ inherits the $\Omega(n^k)$ non-redundancy lower bound of
$\OR_k$.

% \begin{theorem}[Carbonnel~{\cite[Lemma~15]{carbonnel2022redundancy}}]
% \label{thm:carbonnel-lower-bound}
% Let $R$ be a relation of arity $r$ over a finite domain $D$, and let $2 \le k \le r$.
% If
% \[
% f^D_{P_k^u} \notin \pppol(\{R\}) \implies
% \nrdr=\Omega(n^k).
% \]
% In particular, for Boolean relations,
% \[
% f^{\Bool}_{P_k^u} \notin \pppol(\{R\}) \implies
% \nrdr=\Omega(n^k).
% \]
% \end{theorem}

\begin{theorem}[{\cite[Lemma~15]{carbonnel2022redundancy}}]\label{thm:carbonnel-lower-bound}
Let $R$ be a relation of arity $r$ over a finite domain $D$, and let $2\le k\le r$. If
$f^D_{P^u_k}\notin \pppol(\{R\})$, then $\nrdr=\Omega(n^k)$.
\end{theorem}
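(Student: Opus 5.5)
The proof has three steps: derive $\OR_k$ from $R$ via the duality theorem, show that $\OR_k$ itself has non-redundancy $\Omega(n^k)$, and transfer that bound to $R$ through the fgpp-definition.

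\textbf{Step 1: $\OR_k$ is fgpp-definable from $R$.} The first task is to show that $f^{D}_{P^u_k}$ is a pattern partial polymorphism of the $k$-ary relation $O_k:=D^k\setminus\{(d_0,\dots,d_0)\}$ for a suitable $d_0$; in the Boolean case $O_k=\OR_k$. Preservation fails on $O_k$: take the $2^k-1$ tuples of $O_k$ corresponding to $c_1,\dots,c_{2^k-1}$, with $x\mapsto 1$ and $y\mapsto 0$ (over a general $D$, use two distinct elements). Their columns are exactly the $t_i$, so $f$ is defined on every column and maps each one to $y$. The output is therefore $(y,\dots,y)\notin O_k$, i.e.\ $f^D_{P^u_k}\notin\pppol(\{O_k\})$. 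This is a structural property of the cube pattern, but it is not yet the implication we need.

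To get that implication, I would invoke Carbonnel's minimality property: $P^u_k$ is ``universal'' in the sense that every pattern partial polymorphism of $R$ is also one of $O_k$ whenever $f^D_{P^u_k}\in\pppol(R)$ fails. Concretely, I would prove the contrapositive through \Cref{thm:carbonnel-duality}. Assume $f^D_{P^u_k}\notin\pppol(\{R\})$, and fix a witnessing matrix: $2^k-1$ rows in $R$ whose columns all have the shape $t_i$ under some assignment, with output $\notin R$. Identify each coordinate of $R$ with the variable $u_i$ according to the type of its column. Columns of type $t_i$ under an assignment $\varphi$ take two values, $\varphi(x)$ and $\varphi(y)$; these are handled with functional guards $Q_g$ that map a Boolean-like variable to the appropriate values. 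Adding existentially quantified copies then yields an fgpp-formula whose solution set contains all $2^k-1$ nonzero cube points and excludes the all-$y$ point. If that set is not exactly $O_k$, I would intersect over all such witnesses and over coordinate permutations to cut it down to $O_k$. This gives $O_k\in\langle R\rangle_{fg}$. This is the step I expect to be the main obstacle, since it is where the careful pattern bookkeeping of Carbonnel's Lemma~15 is needed.

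\textbf{Step 2: $\NRD_n(O_k)=\Omega(n^k)$.} Place one positive $O_k$-constraint on each $k$-subset $T$ of the variables. This constraint is essential: the assignment setting $T$ to $d_0$ and all other variables to a different value violates only $T$, because every other $k$-set contains a variable outside $T$. This gives $\binom nk=\Omega(n^k)$ constraints.

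\textbf{Step 3: transfer through the fgpp-definition.} Replace each $O_k$-constraint by the constant-size gadget, using fresh auxiliary variables per gadget. Functional guards $Q_g(u,v)$ are eliminated by substituting $g(u)$ for $v$, which is allowed in the literal model; \Cref{thm:positive-nrd} shows that this costs only a constant factor in the number of variables. Each gadget still contains at least one essential $R$-constraint: the assignment witnessing that $T$ is essential extends to the auxiliaries of every other gadget, since those gadgets are satisfied, but cannot extend inside gadget $T$. A minimal set of $R$-constraints in gadget $T$ that blocks this extension contributes a non-redundant constraint, and distinct gadgets use disjoint constraints. So we obtain $\Omega(n^k)$ non-redundant $R$-constraints on $O(n^k)$ variables. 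That variable count is too large, so I would instead follow Carbonnel and share the auxiliaries only via a constant number of copies of each original variable, as the guards permit. This keeps the variable count $O(n)$ and gives $\nrdr=\Omega(n^k)$.
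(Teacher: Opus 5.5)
Your outline (failure of the cube pattern gives an fgpp-definition of $\OR_k$; $\OR_k$ has $\binom nk$ non-redundant constraints; transfer back to $R$) is the route the paper has in mind. The paper gives no proof of its own, only the citation of Carbonnel's Lemma~15 with exactly this framing. The gap is Step~3, which carries all the content.

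With fresh auxiliaries per gadget you get $\binom nk$ non-redundant constraints on $\Theta(n^k)$ variables, i.e.\ only a linear bound, as you notice. The repair ``share the auxiliaries via a constant number of copies of each original variable, as the guards permit'' is asserted, not proved. It is false for existential variables that are not functionally guarded (and \Cref{def:fgppdef} as stated does not require the $y_i$ to be guarded). For example, with an unguarded $y$,
\[
\OR_4(x_1,\dots,x_4)\equiv\exists y\,\bigl(\OR_3(x_1,x_2,y)\wedge\OR_3(\neg y,x_3,x_4)\bigr),
\]
while $\nr{\OR_3}{n}=O(n^3)$, since there are only $O(n^3)$ distinct $\OR_3$-constraints on $n$ variables. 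So the transfer must use that every auxiliary variable is a unary image of a free variable, and your argument never isolates this.

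The missing observation is already in your Step~1. Take a witnessing matrix and set $x\mapsto\alpha$, $y\mapsto\beta$ with $\alpha\neq\beta$. The formula is then $\exists w\,\bigl(R(w_1,\dots,w_r)\wedge\bigwedge_j Q_{h_j}(z_{i_j},w_j)\bigr)$, where:
\begin{enumerate}
\item for a column of type $t_{i_j}$ with values $a_j\neq b_j$, we have $h_j(\alpha)=a_j$ and $h_j(\beta)=b_j$;
\item for a constant column, $h_j$ is constant.
\end{enumerate}
Constant columns arise from $\varphi(x)=\varphi(y)$, and your description omits them.

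Substituting the guards, as you yourself propose, leaves a single $R$-constraint on literals and constants of $z_1,\dots,z_k$, with no auxiliary variables at all. So the ``fresh auxiliaries'' never arise. Put this constraint $C_T$ on every $k$-subset $T$ of the $n$ variables. The assignment equal to $\beta$ on $T$ and $\alpha$ elsewhere turns $C_T$ into the output row, which is not in $R$. It turns every $C_{T'}$ with $T'\neq T$ into one of the input rows, which are in $R$. This is a non-redundant literal-model instance with $\binom nk$ constraints on $n$ variables, and \Cref{thm:positive-nrd} finishes.

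Two smaller points:
\begin{enumerate}
\item Your step ``intersect over all witnesses and coordinate permutations'' is not a valid way to remove tuples outside $\{\alpha,\beta\}^k$, and it is unnecessary. Over $\{0,1\}$ a single witness already defines exactly $\{0,1\}^k\setminus\{\beta^k\}$, and for general $D$ the construction above only uses the values $\alpha,\beta$.
\item The check that the cube fails on $O_k$, and the appeal to \Cref{thm:carbonnel-duality}, play no role in this direction of the argument.
\end{enumerate}
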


For us, this theorem provides a direct computational test: if we can certify that the
universal $k$-cube pattern does not preserve $R$, then we immediately obtain an
$\Omega(n^k)$ lower bound on $\nrdr$.
Consequently,
if 
$
\ell(R):=\max\Bigl(\{1\}\cup
\{k\ge 2 : f^{\{0,1\}}_{P^u_k}\notin \pppol(\{R\})\}\Bigr).
$
then $\nrdr=\Omega\bigl(n^{\,\ell(R)}\bigr)$.

\section{Literals, Constants, and Repetitions}
\label{sec:positive-nrd}

%\amati{Proved the model without repitions, constants, and literals is equivalent to the one with.}

In this section we justify that allowing literals, constants, repeated variables, or only simple positive constraints changes non-redundancy only up to a constant-factor blow-up in the number of variables. The issue is that fgpp-definability naturally allow unary substitutions such as literals and constants, whereas the set-system discussion in \Cref{sec:exceptional5} uses simple positive constraints, i.e., constraints on distinct variables. We show that, for fixed finite domains and fixed arity, these models are equivalent up to a constant-factor blow-up in the number of variables. The proof is by a simple variable-splitting construction: for each original variable, we create copies indexed by the unary substitution and by the coordinate position in the constraint. This removes literals, constants, and repeated variables while preserving non-redundancy witnesses.

Let $D$ be a fixed finite domain, let $\Gamma$ be a fixed finite constraint language containing $r$-ary of a relation over $D$. We use three variants of non-redundancy in this section.
\begin{enumerate}
    \item First, $\nrp{\Gamma}{n}{+}$ denotes the usual positive tuple model: constraints have the form $\{R,(x_{i_1},\ldots,x_{i_k})\}$, where $R\in\Gamma$ has arity $k$, and the tuple $(x_{i_1},\ldots,x_{i_k})$ is ordered and repetitions are allowed.
    \item Second, $\nrp{\Gamma}{n}{\mathrm{simp}}$ denotes the simple positive model: constraints again have the form $\{R,(x_{i_1},\ldots,x_{i_k})\}$, and the variables in the tuple are requred to be distinct.
    \item Third, $\nrp{\Gamma}{n}{\mathrm{lit}}$ denotes the literal model: constraints have the form $\{R,(g_1(x_{i_1}),\ldots,g_k(x_{i_k}))\}$, where $R\in\Gamma$ and each $g_j$ belongs to a fixed finite set $\mathcal U$ of unary maps $D\to D$ containing the identity map.
\end{enumerate}

We treat instances as sets of constraints. Repeated identical constraints are irrelevant for non-redundancy, since duplicate copies are automatically redundant.

\begin{theorem}
\label{thm:positive-nrd}
Let $D$ be a domain of size $|D|=q$, let $\Gamma$ be a fixed finite constraint language containing $r$-ary relations over $D$. Let $\mathcal U$ be a finite set of unary maps $D\to D$ containing the identity map. Then, for every $n\ge 1$, we have
$$\nrp{\Gamma}{n}{\mathrm{simp}}\le \nrp{\Gamma}{n}{+}\le \nrp{\Gamma}{n}{\mathrm{lit}}\le \nrp{\Gamma}{O_{r,q}(n)}{\mathrm{simp}}.$$
Consequently, for fixed $D,\Gamma$, and $\mathcal U$, the three models have the same asymptotic non-redundancy up to constant-factor changes in the number of variables.
\end{theorem}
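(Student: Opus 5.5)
The first two inequalities are inclusions of instance classes, so I would dispatch them immediately. A simple positive instance is a positive instance. A positive instance is a literal instance in which every $g_j$ is the identity, which belongs to $\mathcal U$. Non-redundancy depends only on the constraint set and its satisfying assignments, so a non-redundant instance in the smaller model is non-redundant in the larger one. This gives $\nrp{\Gamma}{n}{\mathrm{simp}}\le \nrp{\Gamma}{n}{+}\le \nrp{\Gamma}{n}{\mathrm{lit}}$.

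The substance is the last inequality, for which I would use a variable-splitting construction. Fix a non-redundant literal instance $I$ on variables $x_1,\dots,x_n$. Since $\Gamma$ is finite, I would take $r$ to be the maximum arity. I introduce new variables $y_{i,g,j}$ for $i\in[n]$, $g\in\mathcal U$, and $j\in[r]$, giving $N=n\,|\mathcal U|\,r$ new variables. Here $|\mathcal U|\le q^q$, so $N=O_{r,q}(n)$. Each constraint $\{R,(g_1(x_{i_1}),\dots,g_k(x_{i_k}))\}$ is replaced by the simple positive constraint $\{R,(y_{i_1,g_1,1},\dots,y_{i_k,g_k,k})\}$. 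Its variables are distinct because the position index $j$ differs across coordinates. Distinct original constraints map to distinct new constraints, so the new instance $I'$ has exactly $|I|$ constraints. If $N$ exceeds the target variable count, the extra variables are simply unused; I would note that $\nrp{\Gamma}{\cdot}{\mathrm{simp}}$ is monotone in the number of variables.

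It remains to show that $I'$ is non-redundant. Take a constraint $C\in I$ and a witness assignment $\sigma$ that satisfies $I\setminus\{C\}$ but not $C$. I define $\sigma'(y_{i,g,j}):=g(\sigma(x_i))$. Every constraint of $I'$ then evaluates under $\sigma'$ to exactly the tuple its preimage evaluates to under $\sigma$. Hence $\sigma'$ satisfies $I'\setminus\{C'\}$ and falsifies $C'$, so each $C'$ is non-redundant. This argument does not need the converse direction, namely that every assignment of $I'$ comes from one of $I$. That is precisely why splitting costs nothing: extra assignments of $I'$ are never used. The inequality $|I|\le \nrp{\Gamma}{N}{\mathrm{simp}}$ follows.

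The main obstacle is one subtle point. In the literal model, a constant map $g$ makes a coordinate fixed, while the simple positive model allows only variables. The construction handles this because $y_{i,g,j}$ is an honest variable that the witness assignment happens to set to the constant value. Non-redundancy only requires the existence of witnesses, not that $I'$ enforce consistency among copies. I would also check that identical constraints in $I$ stay identical after translation, so no duplicates are created, since the map is injective on constraints. The closing asymptotic statement then follows by chaining the inequalities.
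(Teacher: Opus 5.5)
Your proposal is correct and is essentially the paper's own proof. The first two inequalities are model inclusions, and the last uses the same variable-splitting into copies $x_i^{g,j}$ indexed by variable, unary map, and coordinate position, with the lifted witness $\beta(x_i^{g,j})=g(\alpha(x_i))$ and the bound $|\mathcal U|\le q^q$; your extra remarks (monotonicity in the number of variables, injectivity of the translation, and that consistency among copies need not be enforced) are correct points the paper leaves implicit.
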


\begin{proof}
The first two inequalities are immediate. Every simple positive instance is a positive tuple instance, and every positive tuple instance is a literal model instance because $\mathcal U$ contains the identity map.

It remains to prove $\nrp{\Gamma}{n}{\mathrm{lit}}\le \nrp{\Gamma}{O_{r,q}(n)}{\mathrm{simp}}$. Let $I$ be a non-redundant literal model instance over variables $x_1,\ldots,x_n$. Thus every constraint $C$ of $I$ has the form $C=\{R,(g_1(x_{i_1}),\ldots,g_r(x_{i_r}))\}$, where $R\in\Gamma$ and $g_1,\ldots,g_r\in\mathcal U$.

We construct a simple positive $\cspg$ instance $I^{\mathrm{simp}}$. Its variables are $x_i^{g,j}$ for $i\in[n]$, $g\in\mathcal U$, and $j\in[r]$. Upper bounding the size of $\mathcal U$ by the set of all unary maps $D\to D$, gives $|\mathcal U|\leq |D|^{|D|}$. Thus $I^{\mathrm{simp}}$ has $r\cdot|\mathcal U|\cdot n = O_{r,q}(n)$ variables. We translate each constraint $C=\{R,(g_1(x_{i_1}),\ldots,g_r(x_{i_r}))\}$ to the positive constraint
$C^{\mathrm{simp}}=\{R,(x_{i_1}^{g_1,1},x_{i_2}^{g_2,2},\ldots,x_{i_r}^{g_r,r})\}$.
This constraint is simple: even if the original constraint repeats a variable or a unary map, the coordinate index $j$ is different in each position. The translated instance has the same number of constraints as $I$.

We now show that $I^{\mathrm{simp}}$ is non-redundant. Fix a constraint $C$ of $I$. Since $I$ is non-redundant, there is an assignment $\alpha:\{x_1,\ldots,x_n\}\to D$ that satisfies every constraint of $I$ except $C$, and falsifies $C$. Define an assignment $\beta$ to the variables of $I^{\mathrm{simp}}$ by $\beta(x_i^{g,j})=g(\alpha(x_i))$.

By construction, for every constraint $C'$ of $I$, the translated constraint $(C')^{\mathrm{simp}}$ has under $\beta$ exactly the same truth value that $C'$ has under $\alpha$. Therefore $\beta$ satisfies every translated constraint except $C^{\mathrm{simp}}$, and falsifies $C^{\mathrm{simp}}$. Hence $C^{\mathrm{simp}}$ is not implied by the other translated constraints.

Since this holds for every constraint $C$ of $I$, the translated instance $I^{\mathrm{simp}}$ is non-redundant. Therefore every non-redundant literal model instance on $n$ variables yields a non-redundant simple positive instance on $O_{r,q}(n)$ variables with the same number of constraints. This proves the final inequality.
\end{proof}

% Taking $\mathcal U=\mathcal U_D$, the set of all unary maps $D\to D$, gives $|\mathcal U_D|=|D|^{|D|}$. Thus, for every fixed finite domain and fixed finite language, allowing arbitrary unary substitutions, including constants, changes $\NRD$ only by a constant-factor blow-up in the number of variables.

In the Boolean case, if we only need literals and constants, it suffices to take $\mathcal U=\{\mathrm{id},\neg,0,1\}$, so $|\mathcal U|=4$. Hence the blow-up from the literal-and-constant model to the simple positive model is at most $4rn$ variables. In light of this theorem, throughout the paper we suppress these model
distinctions when discussing asymptotic non-redundancy.

\section{Conclusion}\label{sec:conclusion}

We studied the non-redundancy of symmetric Boolean predicates. On the upper-bound side, we introduced \tbalancedness{}, a higher-degree extension of the balancedness framework of Chen, Jansen, and Pieterse~\cite{chen2020best}, and showed that it yields polynomial capturing representations and hence upper bounds on $\nrdr$. On the lower-bound side, we used Carbonnel's~\cite{carbonnel2022redundancy} universal-cube \critr to certify reductions from $\OR_k$ and obtain matching lower bounds. Combining these two \critrs, we classified the asymptotic growth of $\nrdr$ for every symmetric Boolean predicate of arity at most $4$, and for all but two predicates of arity $5$.

The most immediate open problem is to resolve the two remaining arity-$5$ predicates,
\[
    \wt{R_{023}}=\{0,2,3\}
    \qquad\text{and}\qquad
    \wt{R_{124}}=\{1,2,4\}.
\]
For both predicates, our current \critrs give only $\Omega(n^2) \le \nrdr \le O(n^3)$.
Closing this gap would complete the arity-$5$ classification. It would also clarify whether the lifted balancedness upper bound or the universal-cube lower bound can be strengthened, or whether these predicates require genuinely different techniques.

A second direction is to understand non-redundancy beyond symmetric Boolean predicates. Symmetry makes the problem amenable to a finite weight-set analysis, but this structure disappears for general languages. A concrete benchmark is the ternary non-Boolean predicate $\mathrm{BCK}:=\{111,222,012,120,201\}\subseteq\{0,1,2\}^3$ from Bessiere, Carbonnel, and Katsirelos~\cite{bessiere2020chain}, which has been highlighted in later work as a predicate whose non-redundancy is not explained by the standard universal-cube lower-bound machinery~\cite{brakensiek2025richness}. Understanding such examples may require new lower-bound mechanisms beyond reductions from $\OR_k$, and could be a step toward a structural characterization of the exponent governing $\nrdg$ for arbitrary finite constraint languages.

% In this paper we studied the non-redundancy of symmetric Boolean predicates. On the upper-bound side, we introduced a higher-degree extension of the balancedness framework of Chen, Jansen, and Pieterse \cite{chen2020best}; on the lower-bound side, we used Carbonnel's \cite{carbonnel2022redundancy} algebraic framework  based on reductions from OR. Combining these two ingredients, we obtained an exact classification of the growth of $\nrdr$ for every symmetric Boolean predicate of arity at most $4$, and an almost complete classification at arity $5$. At arity $5$, the only unresolved predicates are those corresponding to the weight sets $\{0,2,3\}$ and $\{1,2,4\}$.

% The most immediate open problem is therefore to determine the correct asymptotic growth of $\nrdr$ for these two predicates. At present, our framework yields an $\Omega(n^2)$ lower bound and an $O(n^3)$ upper bound for each of them. Resolving this gap would complete the arity-$5$ picture and, more importantly, clarify whether the current upper- and lower-bound methods can be strengthened to meet, or whether genuinely new ideas are needed.

% A broader direction is to understand non-redundancy beyond the symmetric Boolean setting. The symmetric case is already rich but still highly structured. For more general languages, earlier work suggests that the behavior of non-redundancy is more varied, and reductions from \OR\amati{maybe mention BCK here} are unlikely to capture the whole picture. Finding the right structural framework for asymmetric languages remains a natural next step.

\section*{Acknowledgement}

We thank Joshua Brakensiek, Venkatesan Guruswami, and Aaron Putterman for helpful discussions about non-redundancy of low-arity Boolean predicates, and in particular for pointing us to the Deza--Erdos--Frankl \cite{deza1978intersection} bound on uniform set systems with restricted intersections.

\paragraph{AI disclosure.}
GPT-5.4 Thinking was used to assist with parts of the computational implementation and with limited editorial polishing of the manuscript. The pseudocode, algorithmic design, proofs, mathematical statements, and overall structure of the paper were developed by the authors. The AI system was used only to help convert the authors' specifications into executable code and to assist with routine implementation, debugging, and writing refinement. The authors take full responsibility for the correctness of the code, results, and exposition.

\printbibliography

% \appendix

\end{document}